\newtheorem{theorem}{Theorem}[section]
{\theorembodyfont\rmfamily
\newtheorem{lemma}{Lemma}[section]

\newtheorem{definition}{Definition}
}
\newenvironment{proof}{\begin{paragraph}
          {Proof}}{\end{paragraph}}
\renewenvironment{abstract}
 {\small\begin{quote}{\textbf{Abstract}}\,\,}{\end{quote}}
\newenvironment{keywords}
 {\small\begin{quote}{\textbf{Keywords}}\,\,}{\end{quote}}
\newenvironment{classification}
 {\small\begin{quote}{\textbf{2010 Mathematics Subject Classification}}\,\,}{\end{quote}}
\date{}
\title{\vspace{-9ex}
{\centering
 \textbf{\large Pricing Formulae of Power Binary and Normal Distribution Standard Options and  Applications
}}}
\author{\small\textsf{\bfseries 
$^{1}$~~ Hyong-Chol O, $^{2}$~~Dae-Sung Choe}\\[-.5ex]
{\footnotesize  ${}^{1, 2}$ Faculty of Mathematics, \textbf{Kim Il Sung} University,}
{\footnotesize   Pyongyang , D P R K}\\[-.5ex]
{\footnotesize e-mail: $^{1}$hc.o@ryongnamsan.edu.kp }}
\begin{document}

\maketitle
\thispagestyle{empty}

\vspace{-.6cm}

\begin{abstract}

In this paper the Buchen's pricing formulae of (higher order) asset and bond binary options are incorporated into the pricing formula of power binary options and a pricing formula of "the normal distribution standard options" with the maturity payoff related to a power function and the density function of  normal distribution is derived. And as their applications, pricing formulae of savings plans that provide a choice of indexing and discrete geometric average Asian options are derived and the fact that the price of discrete geometric average Asian option converges to the price of continuous geometric average Asian option when the largest distance between neighboring monitoring times goes to zero is proved. 
\end{abstract}

\begin{keywords}
power binary option; normal distribution standard option; savings plans that provide a choice of indexing; discrete geometric average Asian option
\end{keywords} 

\begin{classification}
35C15, 91G80
\end{classification}

%
%

\section{Introduction}

\indent

Option pricing problem is one of the main questions in financial mathematics. The studying method expressing complex exotics or corporate bonds as a portfolio of binaries becomes a widely used way in solving financial practical problem \cite{P1,P2,I,RR,OK,ODC,ODJ,OYD,OJJ,OJS,SB}.

In this paper, solution representations of Black-Scholes PDE with power functions or cumulative distribution functions of normal distribution as a maturity payoff are provided and using them savings plans that provide a choice of indexing and discrete geometric average Asian options are studied. This becomes binary option methods and their applications. 

The basic idea of expressing the payoffs of complex options in terms of binary options was initiated in Rubinstein and Reiner(1991) \cite{RR}, where they considered the relationship of barrier options and binaries. Ingersoll(2000) \cite{I} extended the idea by expressing complex derivatives in terms of “event- driven” binaries. Event $\varepsilon$ driven binary option pays one unit of underlying asset if and only if the event $\varepsilon$ occurs, otherwise it pays nothing.

Buchen(2004) introduced the concepts of first and second order asset and bond binary options and provided the pricing formula for them using expectation method. And then he applied them to pricing some dual expiry exotic options including compound option, chooser option, one time extendable option, one time shout option, American call option with one time dividend and partial barrier options. 

Skipper(2009) \cite{SB} introduced a general concept of multi-period, multi-asset M-binary which contains asset and bond binaries of Buchen \cite{P1} and provided the pricing formula by expectation method. They noted that portfolios of M-binaries can be used to statically replicate many European exotics and applied their results to pricing a simple model of an executive stock option (ESO) which has been used in practice to value corporate remuneration packages.  

In \cite{OK}, the authors extended the concepts of Buchen's first and second order asset and bond binary options to a concept of higher order binary option, deduced the pricing formulae by PDE method and use it to get the pricing of some multiple-expiry exotic options such as Bermudan option and multi time extendable option and etc.

In \cite{ODJ} and \cite{OJS} the authors obtained some pricing formulae of corporate bonds with credit risk including the discrete coupon bonds using the method of higher order binary options. In \cite{OYD} the authors extend higher order binary options into the case with time dependent coefficients and using it studied the discrete coupon bonds. 

Some banks have a service of savings account with choice item of interest rates, the holder of which has the right to select one among domestic and foreign interest rates and \cite{BB} calculated the   price of such savings plans that provide a choice of indexing using expectation method. In \cite{ORW} the authors studied the price of such savings plans that provide a choice of indexing using PDE method. In such savings plans, the foreign currency price of the expiry paypff is related to the the inverse of underlying asset price \cite{ORW}.

The geometric average Asian option is one of well studied exotics. In \cite{J} they studied continuous geometric average Asian options using PDE method and provided the pricing formulae, call-put parities, binomial tree models and explicit difference scheme on characteristic lines. In \cite{P2} they studied continuous models of geometric average Asian options with zero dividend rates using expectation and PDE methods and provided the pricing formulae of discrete geometric average Asian options with expectation method. They showed that the prices of discrete geometric average Asian options converge to the price of the corresponding continuous geometric average Asian options when the monitoring time interval goes to zero. The expiry payoffs of discrete geometric average Asian options are related to the n-th root of the price of underlying asset.

The purpose of this article is to study the pricing model of the option whose terminal payoff is related to the power of the price of the underlying asset in the viewpoint of PDE and apply them into pricing problems of some financial contracts.

First we find the pricing formula of so called~ "power binary option", the expiry payoff of which is the power of underlying asset price, by using the partial differential equation method. Our formula includes the pricing formulae of \cite{P1, OK} for the (higher order) asset and bond binary options. We apply it to calculate the price of savings plans that provide a choice of indexing.

Then we derive the solution representation of the terminal value problem of the Black-Scholes partial differential equation whose terminal value is the product of a power function and a normal distribution function and using it we obtain the pricing formula of discrete geometric average Asian option. And we prove that the price of discrete geometric average Asian option converges to the price of the continuous geometric average Asian option as the interval between neighboring monitoring times goes to zero. It is very interesting in the view of PDE that the limit of the solutions to connecing problems of initial value problems of Black-Scholes partial differential equations on several time intervals becomes the solution to the initial value problem of another partial differential equation. This shows that using the backgrounds of problems a solving problem of a complicated differential equation can be reduced to solving problems of simple differential equations.

The rest of this paper is organized as follow. In section 2 we provide the pricing formula of an European option (power binary option and normal distribution vanilla option) whose payoff is related to the power of the underlying asset price and density function of the normal distribution by using the partial differential equation method. In section 3 we apply the result of section 2 to provide the pricing formula of savings plans that provide a choice of indexing and discrete geometric average Asian option and we describe the result of the convergence of the price of discrete geometric average Asian option as the interval between neighboring monitoring times goes to zero. In section 4 we povide the price formula of high order power binary option. 

\section{Pricing Formulae of Power Binary and Normal Distribution Standard Options}

\begin{definition}
 ~Let $r,q,\sigma $~be interest rate, the dividend rate and the volatility of the underlying asset respectively. Consider the terminal-value problem of the following Black-Scholes equation.
\end{definition}
\begin{equation}\label{eq-1}
\frac{\partial V}{\partial t}+\frac{1}{2}\sigma ^2 x^2 \frac{\partial V}{\partial x^2}+(r-q)x\frac{\partial V}{\partial x}-rV=0,(0<t<T,x>0)
\end{equation}
\begin{equation}\label{eq-2}
V(x,T)=x^\alpha
\end{equation}
According to preposition 1 of \cite{OK}, the solution of this equation is given as following:
\begin{equation}\label{eq-3}
V(x,t)=\frac{e^{-r(T-t)}}{\sigma \sqrt{2 \pi (T-t)}}\int_0^{\infty} \frac{1}{z}e^{-\frac{1}{2\sigma ^2(T-t)}\left(\ln\frac{x}{z}+\left(r-q-\frac{\sigma ^2}{2}\right)(T-t)\right) ^2} z^{\alpha}dz
\end{equation}

This $V(x,t)$ is called a \textbf{"standard power option" } or \textbf{"standard  $\alpha$-power option"} with expiry payoff $x^ \alpha$.~We denote the price of standard  $\alpha$-power option as $M^ \alpha (x,t)$.

The pricing formula of standard power option is provided as following.

\begin{theorem}\label{theorem1}
The price of $\alpha$-power standard option, a solution of the Black-Scholes equation \eqref{eq-1}, \eqref{eq-2}  is given as 
\end{theorem}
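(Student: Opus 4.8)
The plan is to evaluate the Gaussian integral in \eqref{eq-3} in closed form. Write $\tau = T-t$ and perform the substitution $z = e^{s}$, so that $\frac{dz}{z} = ds$, $z^{\alpha} = e^{\alpha s}$, and the integral over $z \in (0,\infty)$ becomes an integral over $s$ on the whole real line. Putting $\mu = \ln x + \left(r-q-\frac{\sigma^{2}}{2}\right)\tau$, the argument of the exponential turns into $-\frac{(\mu - s)^{2}}{2\sigma^{2}\tau} + \alpha s$, so that
\begin{equation*}
M^{\alpha}(x,t) = \frac{e^{-r\tau}}{\sigma\sqrt{2\pi\tau}}\int_{-\infty}^{\infty} \exp\!\left\{-\frac{(\mu-s)^{2}}{2\sigma^{2}\tau} + \alpha s\right\} ds .
\end{equation*}
The Gaussian decay dominates the growth of $e^{\alpha s}$ for every real $\alpha$, so the integral converges absolutely and the manipulation is legitimate.

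Next I would complete the square in $s$. Shifting $s \mapsto s + \mu$ and regrouping, the exponent becomes $-\frac{1}{2\sigma^{2}\tau}\left(s - \alpha\sigma^{2}\tau\right)^{2} + \alpha\mu + \frac{1}{2}\alpha^{2}\sigma^{2}\tau$. The remaining integral $\int_{-\infty}^{\infty} e^{-(s-\alpha\sigma^{2}\tau)^{2}/(2\sigma^{2}\tau)}\,ds$ equals $\sigma\sqrt{2\pi\tau}$, which cancels the normalizing prefactor exactly, leaving $M^{\alpha}(x,t) = e^{-r\tau}\exp\{\alpha\mu + \tfrac12\alpha^{2}\sigma^{2}\tau\}$. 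Substituting back the value of $\mu$ and pulling the $\ln x$ term into $x^{\alpha}$, the constant multiplying $\tau$ in the residual exponent simplifies to $\alpha(r-q) + \tfrac12\sigma^{2}\alpha(\alpha-1) - r$, which gives
\begin{equation*}
M^{\alpha}(x,t) = x^{\alpha}\exp\left\{\left[\alpha(r-q) + \tfrac12\sigma^{2}\alpha(\alpha-1) - r\right](T-t)\right\}.
\end{equation*}

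There is no genuine obstacle in this argument; the only care required is the bookkeeping in completing the square and in reassembling the $\ln x$ contributions into the factor $x^{\alpha}$, together with a brief remark on convergence of the integral. As an independent verification one may instead insert the ansatz $V(x,t) = x^{\alpha} g(t)$ directly into \eqref{eq-1}: the powers of $x$ cancel, leaving the linear ODE $g'(t) + \left[\alpha(r-q) + \tfrac12\sigma^{2}\alpha(\alpha-1) - r\right]g(t) = 0$ with terminal condition $g(T)=1$ coming from \eqref{eq-2}, whose unique solution is precisely the exponential factor above; this also recovers the asset binary ($\alpha=1$, value $xe^{-q(T-t)}$) and bond binary ($\alpha=0$, value $e^{-r(T-t)}$) as special cases, consistent with \cite{OK}.
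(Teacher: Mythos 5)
Your proof is correct and is essentially the paper's argument: the paper's single substitution $y=\bigl[\ln\frac{x}{z}+\bigl(r-q-\frac{\sigma^2}{2}+\alpha\sigma^2\bigr)(T-t)\bigr](\sigma\sqrt{T-t})^{-1}$ is exactly the completed-square variable you arrive at after setting $z=e^s$ and shifting, so both routes reduce \eqref{eq-3} to the standard Gaussian integral and yield \eqref{eq-4}--\eqref{eq-5}. The only caveat is cosmetic: your auxiliary $\mu=\ln x+(r-q-\frac{\sigma^2}{2})\tau$ clashes with the paper's $\mu$ in \eqref{eq-5} and should be renamed; your closing ODE-ansatz check $V=x^{\alpha}g(t)$ is a nice independent verification not present in the paper.
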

\begin{equation}\label{eq-4}
M^{\alpha}(x,t)=e^{\mu (T-t)}x^{\alpha}
\end{equation}

Here

\begin{equation}\label{eq-5}
\mu (r,q,\sigma ,\alpha)=(\alpha -1)r-\alpha q+\frac{\sigma ^2}{2}(\alpha ^2-\alpha)
\end{equation}
\begin{proof}
 Using the transformation

$$y=\left[\ln \frac{x}{z}+\left(r-q-\frac{\sigma ^2}{2}+\alpha \sigma ^2 \right)(T-t) \right]\left( \sigma \sqrt {(T-t)} \right )^{-1}$$

Then the equation \eqref{eq-3} is changed into the following.

$$V(x,t)=e^{\mu (T-t)}x^{\alpha} \frac{1}{\sqrt {2\pi}}\int_{-\infty}^{\infty} e^{-\frac{y^2}{2}}dy=e^{\mu (T-t)x^{\alpha}}.$$
~~~~~~~~~~~~~~~~~~~~~~~~~~~~~~~~~~~~~~~~~~~~~~~~~~~~~~~~~~~~~~~~~~~~~~~~~~~~~~~~~~~~~~~~~(Q.E.D) 
\end{proof}

\begin{definition} 
The binary contract based on the standard  $\alpha$-power option is called ''$\alpha$-power binary option''. i.e the price of a power binary option is a solution of \eqref{eq-1} satisfying the terminal value condition 
\end{definition}
\begin{equation}\label{eq-6}
V(x,t)=x^{\alpha}\cdot 1(sx>s\xi)
\end{equation}
Here s(+ or -) is called ''sign indicator'' of upper or lower binary options and we denote the price of a  $\alpha$-power option as $\left(M^\alpha \right)_\xi ^s(x,t)$ .Since

$$\left (M^\alpha \right )_\xi ^{+}(x,T)+\left (M^\alpha \right )_\xi ^{-}(x,T)=M^\alpha (x,T)$$

we have the following symmetric relation between the  $\alpha$-power standard option price and the corresponding upper  $\alpha$-power binary option and lower  $\alpha$-power binary option.
\begin{equation}\label{eq-7}
\left (M^\alpha \right )_\xi ^{+}(x,t)+\left (M^\alpha \right )_\xi ^{-}(x,t)=M^\alpha (x,t)
\end{equation}

\begin{theorem}\label{theorem 2}
The price of  $\alpha$-power binary option (solution of \eqref{eq-1},\eqref{eq-5}) is given as following.
\begin{equation}\label{eq-8}
\left(M^\alpha \right)_\xi ^s(x,t)=e^{\mu (T-t)}x^{\alpha}N(sd)
\end{equation}
Here
\begin{eqnarray}\label{eq-9}
d & =& d(\frac{x}{\xi},r,q,\sigma ,\alpha ,T-t)\nonumber
\\
& =& \left[\ln \frac{x}{\xi}+\left (r-q-\frac{\sigma ^2}{2}+\alpha \sigma ^2 \right )(T-t)\right](\sigma \sqrt {T-t})^{-1} 
\end{eqnarray}
$$N(d)=\frac {1}{\sqrt {2 \pi}} \int_{-\infty}^d e^{-\frac{y^2}{2}}dy$$
\end{theorem}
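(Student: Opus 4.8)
The plan is to follow the same route as the proof of Theorem~\ref{theorem1}, starting from the integral representation \eqref{eq-3}, but restricting the domain of integration to reflect the indicator in the terminal payoff \eqref{eq-6}. Since $x^\alpha\cdot 1(sx>s\xi)$ differs from $x^\alpha$ only by the factor $1(sx>s\xi)$, the solution of \eqref{eq-1}, \eqref{eq-6} is obtained from \eqref{eq-3} by replacing the integral over $(0,\infty)$ with the integral over $\{z>0:\ sz>s\xi\}$, i.e. over $(\xi,\infty)$ when $s=+$ and over $(0,\xi)$ when $s=-$:
\begin{equation*}
\left(M^\alpha\right)_\xi^s(x,t)=\frac{e^{-r(T-t)}}{\sigma\sqrt{2\pi(T-t)}}\int_{\{sz>s\xi\}}\frac{1}{z}e^{-\frac{1}{2\sigma^2(T-t)}\left(\ln\frac{x}{z}+\left(r-q-\frac{\sigma^2}{2}\right)(T-t)\right)^2}z^\alpha\,dz .
\end{equation*}

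Next I would apply the same substitution as in Theorem~\ref{theorem1},
$$y=\left[\ln\frac{x}{z}+\left(r-q-\frac{\sigma^2}{2}+\alpha\sigma^2\right)(T-t)\right]\left(\sigma\sqrt{T-t}\right)^{-1},$$
and complete the square in the exponent. Exactly as in the unrestricted case this factors out $e^{\mu(T-t)}x^\alpha$ with $\mu$ given by \eqref{eq-5} and reduces the integrand to the standard normal density $\frac{1}{\sqrt{2\pi}}e^{-y^2/2}$. The one new point is the effect on the limits: the map $z\mapsto y$ is strictly decreasing, $z=\xi$ corresponds to $y=d$ with $d$ as in \eqref{eq-9}, $z\to 0^+$ to $y\to+\infty$, and $z\to\infty$ to $y\to-\infty$. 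Hence for $s=+$ the set $z>\xi$ is carried onto $y<d$, so the integral equals $N(d)$; for $s=-$ the set $0<z<\xi$ is carried onto $y>d$, so the integral equals $1-N(d)=N(-d)$. In both cases the value is $N(sd)$, which gives \eqref{eq-8}.

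The only real subtlety is this reversal of orientation in the limits of integration and matching it with the sign indicator $s$; the rest is the computation already done for Theorem~\ref{theorem1}. As a consistency check, summing the $s=+$ and $s=-$ formulas gives $e^{\mu(T-t)}x^\alpha\bigl(N(d)+N(-d)\bigr)=e^{\mu(T-t)}x^\alpha=M^\alpha(x,t)$, in agreement with \eqref{eq-7}. Alternatively one could verify \eqref{eq-8} by direct substitution into \eqref{eq-1} and checking the terminal condition \eqref{eq-6}, but passing through \eqref{eq-3} is shorter and consistent with the style of the preceding proof.
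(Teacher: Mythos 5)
Your proposal is correct and follows essentially the same route as the paper: both start from the integral representation of Proposition 1 of \cite{OK} with the indicator restricting the domain, apply the same substitution as in Theorem \ref{theorem1}, and observe that the decreasing map $z\mapsto y$ sends $\{sz>s\xi\}$ onto $\{sy<sd\}$, yielding $N(sd)$. The only cosmetic difference is that the paper encodes the orientation reversal via the identity $1(sz>s\xi)=1(sy<sd)$ followed by the substitution $y'=sy$, whereas you track the two cases $s=\pm$ explicitly; these are the same argument.
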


\begin{proof}
From the preposition 1 of \cite{OK}, we have

$${\displaystyle{\left(M^\alpha \right)_\xi ^s(x,t)=\frac{e^{-r(T-t)}}{\sigma \sqrt {2\pi (T-t)}}\int_0^{\infty}\frac{1}{z}e^{\frac{1}{2\sigma ^2(T-t)}\left [ \ln \frac{x}{z}+\left( r-q-\frac{\sigma ^2}{2}\right)^2(T-t)\right]^2}z^\alpha 1(sz>s\xi)dz}}$$
Let's use the variable transformation
$$y=\left[ \ln \frac{x}{z}+\left( r-q+\frac{\sigma ^2}{2}+\alpha \sigma ^2\right)(T-t)\right](\sigma \sqrt {(T-t)})^{-1}$$
Then $$\ln z=\left[ \ln x+\left( r-q-\frac {\sigma ^2}{2}+\alpha \sigma ^2\right)(T-t)\right]-y(\sigma \sqrt {(T-t)})$$~and consider
$$1(sz>s\xi)=1(s\ln z>s\ln \xi)=1(sy<sd)$$
Then we have
$$\left(M^\alpha \right)_\xi ^{+}(x,t)=e^{\mu (T-t)}x^\alpha \frac{1}{\sqrt {2\pi}}\int_{-\infty}^{+\infty}e^{-\frac{-y^2}{2}}1(sy<sd)dy=I$$
Again using the variable transformation $y'=sy,dy'=sdy$ we have
\begin{eqnarray}
I &=& e^{\mu (T-t)}x^\alpha \frac{1}{\sqrt {2\pi}}s\int_{-s\infty}^{+s\infty}e^{-\frac{-{y'}^2}{2}}1(y'<sd)dy'=e^{\mu (T-t)}x^\alpha \frac{1}{\sqrt {2\pi}}\int_{-\infty}^{sd}e^{-\frac{{-y'}^2}{2}}dy'=\nonumber
\\
&=& e^{\mu (T-t)}x^\alpha N(sd)\nonumber
\end{eqnarray}
(QED)
\end{proof}

\textbf{Remark 1:} Consider the case when the binary condition of the expiry payoff function has more general form. i.e

\begin{equation}\label{eq-10}
V(x,t)=x^ \alpha \cdot 1(sx^\beta >s\xi) 
\end{equation}
We have
$$1(sx^\beta >s\xi)=1(s \cdot sgn(\beta)x>s \cdot sgn(\beta)\xi ^{\frac{1}{\beta}})=1(tx>t\zeta)$$
So if we $t=s \cdot sgn(\beta),\zeta =\xi ^{\frac{1}{\beta}}$, the solution of \eqref{eq-1},\eqref{eq-10} is given as $\left(M^\alpha \right)_\zeta ^t(x,t)$.

\textbf{Remark 2:} When $\alpha =0$ , the  $\alpha$-power binary option is the cash binary option and when $\alpha =1$ , it is the asset binary option. Theorem \eqref{theorem 2} includes the results for the first order binary options of \cite{P1,P2,J} as a special case. And as shown in section 3, by using the  $\alpha $-power binary options we can represent the prices of financial contracts which are not able to be represented in terms of cash or asset binary options.

\begin{definition}
 The solution of \eqref{eq-1} with the expiry payoff 

\begin{eqnarray}\label{eq-11}
V(X,T)=X^\beta N\left[ \delta \left( \frac{X^i}{K},\tau _1,\tau '_1,\alpha \right)\right]
\end{eqnarray}

$$\delta \left( \frac{X^i}{K},\tau _1,\tau '_1,\alpha \right)=\left [ \ln \frac{X^i}{K}+\left( r-q-\frac{\sigma ^2}{2}+\alpha \sigma ^2\right)\tau _1\right ](\sigma \sqrt {\tau_1'})^{-1}$$

is called a "price of power normal distribution standard options". 
\end{definition}
Denote $\tau =T-t$.
\begin{theorem}\label{theorem 3}
 The solution of \eqref{eq-1},\eqref{eq-11} is provided as follows

\begin{equation}\label{eq-12}
V(X,\tau ;\tau _1,\tau _1')=X^{\beta}e^{\mu (\beta)\tau}N(d_1)
\end{equation}
where
\begin{eqnarray}\label{eq-13}
d_1 &=& \left[ \ln \frac{X^i}{K}+\left( r-q-\frac{\sigma ^2}{2}+\frac{i\beta \tau +\alpha \tau _1}{i\tau +\tau _1}\sigma ^2\right)(i\tau +\tau _1)\right](\sigma \sqrt {i^2\tau +\tau _1'})^{-1}=\nonumber
\\
&=& \delta \left( \frac{X^i}{K},i\tau +\tau_1,i^2\tau +\tau _1',\frac{i\beta \tau +\alpha \tau _1}{i\tau +\tau _1}\right)
\end{eqnarray}
and $\mu (\beta)=(\beta -1)r-\beta q+\displaystyle{\frac{\sigma ^2}{2}}(\beta ^2-{\beta})$ can be given by \eqref{eq-5}
\end{theorem}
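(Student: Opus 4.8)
The plan is to reduce \eqref{eq-12}--\eqref{eq-13} to a single Gaussian computation, in exactly the spirit of the proofs of Theorems~\ref{theorem1} and~\ref{theorem 2}. I would start from the source-solution representation of the terminal-value problem for \eqref{eq-1} (Proposition~1 of \cite{OK}, the same kernel used in \eqref{eq-3}): writing $\tau=T-t$,
\[
V(X,\tau)=\frac{e^{-r\tau}}{\sigma\sqrt{2\pi\tau}}\int_0^\infty\frac1z\,\exp\!\left(-\frac{\bigl(\ln\frac Xz+(r-q-\tfrac{\sigma^2}{2})\tau\bigr)^2}{2\sigma^2\tau}\right)z^{\beta}N\!\left[\delta\!\left(\tfrac{z^i}{K},\tau_1,\tau_1',\alpha\right)\right]dz .
\]
Then I would substitute the integral form $N[\delta(z^i/K,\tau_1,\tau_1',\alpha)]=\frac1{\sqrt{2\pi}}\int_{-\infty}^{\delta(z^i/K,\ldots)}e^{-w^2/2}\,dw$, so that $V$ becomes a two-dimensional Gaussian integral. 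After the log-substitution $z=X\exp\!\bigl((r-q-\tfrac{\sigma^2}{2})\tau-\sigma\sqrt{\tau}\,y\bigr)$, which turns the kernel into $e^{-y^2/2}$ (as in the proof of Theorem~\ref{theorem 2}) and makes $\ln z^i=i\ln z$ linear in $y$, the upper limit $w<\delta(z^i/K,\ldots)$ collapses into a single linear inequality in $(y,w)$.

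Next I would carry out the two Gaussian integrations. Pulling the factor $z^{\beta}$ through the kernel and completing the square in $y$ yields the prefactor $X^{\beta}e^{\mu(\beta)\tau}$: the resulting exponent is $-r+\beta(r-q)+\tfrac{\sigma^2}{2}(\beta^2-\beta)$, which is $\mu(\beta)$ from \eqref{eq-5}, exactly the algebra appearing at the end of the proof of Theorem~\ref{theorem1}. What remains is a bivariate normal probability of the half-plane cut out by the (shifted) linear inequality, which one evaluates via the elementary identity $E[e^{a\eta}\mathbf{1}(b\eta+c\xi>d)]=e^{a^2/2}N\!\bigl((ab-d)/\sqrt{b^2+c^2}\bigr)$ for independent standard normals $\eta,\xi$ (equivalently, by rotating to the direction of the linear form). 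Here the coefficient $b$ of the ``outer'' variable carries a factor $i$ inherited from $z^i$, so the effective variance is proportional to $i^2\tau+\tau_1'$ --- this is the origin of $\sqrt{i^2\tau+\tau_1'}$ in \eqref{eq-13} --- while the two drift corrections, the $i\beta\sigma^2\tau$ produced by the Girsanov-type shift forced by $z^{\beta}$ (entering the $X^i$-drift again through the factor $i$) and the $\alpha\sigma^2\tau_1$ already sitting inside $\delta$, add up to $\sigma^2(i\beta\tau+\alpha\tau_1)$, i.e.\ to the weighted average $\tfrac{i\beta\tau+\alpha\tau_1}{i\tau+\tau_1}\sigma^2$ multiplying $(i\tau+\tau_1)$. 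Collecting everything gives precisely \eqref{eq-12}--\eqref{eq-13}; the terminal condition \eqref{eq-11} is recovered by letting $\tau\to0$, since then $e^{\mu(\beta)\tau}\to1$ and $d_1\to\delta(X^i/K,\tau_1,\tau_1',\alpha)$.

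The only real obstacle is the bookkeeping in this double Gaussian integral: one has to track carefully how the exponent $i$ rescales the Brownian contribution of the outer time $\tau$ so that the variances combine as $i^2\tau+\tau_1'$ rather than $\tau+\tau_1'$, keep $\tau_1$ (drift) and $\tau_1'$ (variance) separate throughout, and check that the two independent drift shifts fuse into the single ratio in \eqref{eq-13}; sign slips are easy here. Conceptually this computation is the semigroup-type phenomenon emphasized in the introduction --- composing the Black--Scholes propagator over time $\tau$ with a payoff that is itself a (power of a) normal-distribution factor over ``time'' $\tau_1$ produces a single normal-distribution factor over the combined ``time'' $i\tau+\tau_1$. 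An alternative, computationally comparable, route would be to verify by direct substitution that \eqref{eq-12}--\eqref{eq-13} solves \eqref{eq-1} with terminal data \eqref{eq-11}; I would keep the kernel computation as the primary argument because it parallels the earlier proofs in the paper.
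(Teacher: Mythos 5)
Your proposal is correct and follows essentially the same route as the paper: both start from the Proposition~1 kernel representation, write the payoff's $N$-factor as an inner Gaussian integral to obtain a two-dimensional Gaussian integral, extract the prefactor $X^{\beta}e^{\mu(\beta)\tau}$ by completing the square, and collapse the remaining half-plane probability to $N(d_1)$ with variance $i^{2}\tau+\tau_1'$ and the weighted drift $\tfrac{i\beta\tau+\alpha\tau_1}{i\tau+\tau_1}\sigma^{2}$. The only cosmetic difference is that you package the final reduction as the standard projection identity for independent normals, whereas the paper performs an explicit two-variable change of coordinates with its Jacobian and then integrates out one variable.
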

\begin{proof}
From proposition 1 of \cite{OK}, the solution of \eqref{eq-1},\eqref{eq-11}  is expressed as follows:

\begin{multline}\label{eq-14}
V(X,\tau ,\tau _1,\tau _1')=
\\
\displaystyle{=\frac{e^{-r\tau}}{\sigma \sqrt {2\pi \tau}}\int_0^{+\infty}\frac{z^{\beta}}{z}\left(\frac{1}{\sqrt {2\pi}}\int_{-\infty}^{\delta (\frac{z^i}{K},\tau _1,\tau _1',\alpha)}e^{\frac{-y^2}{2}}dy \right)e^{-\frac{1}{2\sigma ^2\tau}\left(\ln \frac{X}{z}+(r-q-\frac{\sigma ^2}{2})\tau \right)^2}dz=I}
\end{multline}
Here

$$\delta (\frac{z^i}{K},\tau _1,\tau _1',\alpha)=\left[\ln \frac{z^i}{K}+(r-q-\frac{\sigma ^2}{2}+\alpha \sigma ^2)\tau _1 \right](\sigma \sqrt {\tau _1'})^{-1}$$
In \eqref{eq-14} take the variable transformation:
\begin{equation}\label{eq-15}
y_1=\frac{y\sqrt {\tau _1'}}{\sqrt {\tau _1'+i^2\tau}}+i\left[\ln \frac{X}{z}+(r-q-\frac{\sigma ^2}{2}+\beta \sigma ^2)\tau  \right](\sigma \sqrt {\tau _1'+i^2\tau})^{-1}
\end{equation}
$$y_2=\left[ \ln \frac{X}{z}+(r-q-\frac{\sigma ^2}{2}+\beta \sigma ^2)\tau \right]$$

Then we have

$$\left | \frac{\partial (y_1,y_2)}{\partial (y,z)} \right | =\frac{1}{z\sigma \sqrt \tau}\frac{\sqrt {\tau_1'}}{\sqrt {\tau_1'+i^2 \tau}}$$
And $z>0\Leftrightarrow y_2<+\infty$ and

$$y<\delta (\frac{z^i}{K},\tau _1,\tau _1',\alpha)\Leftrightarrow$$
$$\Leftrightarrow y_1<\frac{1}{\sigma \sqrt {\tau _1'+i^2\tau}}\left[ \ln \frac{X^i}{K}+(r-q-\frac{\sigma ^2}{2})(\tau _1+i\tau)+(\alpha \tau _1+i\beta \tau)\sigma ^2=\right]$$
$$=\delta \left( \ln \frac{X^i}{K},\tau _1+i\tau,\tau _1'+i^2\tau ,\frac{\alpha \tau _1+i\beta \tau}{\tau _1+i\tau}\right)=d_1$$

In the exponent of the last exponential function in the integrand of \eqref{eq-14}, we consider
$$exp\left( \beta\ln \frac{X}{z}\right)=\frac{X^\beta}{z^\beta}$$
$$e^{-r\tau}\cdot e^{\left(r\beta-q\beta+\frac{1}{2}\sigma ^2(\beta ^2-\beta) \right)\tau}$$
$$y=\frac{\sqrt {\tau _1'+i^2\tau}}{\sqrt {\tau _1'}}y_1-y_2i\frac{\sqrt \tau}{\sqrt {\tau _1'}}$$
Then we can rewrite equation \eqref{eq-14} by follows:

\begin{equation}\label{eq-16}
\displaystyle{I=\frac{X^\beta e^{\mu (\beta \tau)}}{2\pi}\frac{\sqrt {\tau _1'+i^2\tau}}{\sqrt {\tau _1'}}\int_{-\infty}^{d_1} \int_{-\infty}^{+\infty}e^{\frac{\left( \frac{\sqrt {\tau _1'+i^2\tau}}{\sqrt {\tau _1'}}y_1-y_2i\frac{\sqrt \tau}{\sqrt {\tau _1'}}\right)^2+y_2^2}{2}}}dy_2dy_1
\end{equation}
In \eqref{eq-16} the exponent is

$$\left( \frac{\sqrt {\tau _1'+i^2\tau}}{\sqrt {\tau _1'}}y_1-y_2i\frac{\sqrt \tau}{\sqrt {\tau _1'}}\right)^2+y_2^2=\left( \frac{\sqrt {\tau _1'+i^2\tau}}{\sqrt {\tau _1'}}y_2-y_1i\frac{\sqrt \tau}{\sqrt {\tau _1'}}\right)^2+y_1^2$$
Since $\frac{\sqrt {\tau _1'+i^2\tau}}{\sqrt {\tau _1'}}\int_{-\infty}^{+\infty}e^{\left( \frac{\sqrt {\tau _1'+i^2\tau}}{\sqrt {\tau _1'}}y_2-y_1i\frac{\sqrt \tau}{\sqrt {\tau _1'}}\right)^2}dy_2=\sqrt {2\pi}$, we can rewrite equation \eqref{eq-16} as follows:
$$I=\frac{X^\beta e^{\mu (\beta)\tau}\tau}{\sqrt {2\pi}}\int_{-\infty}^{d_1}e^{-\frac{y_1^2}{2}}dy_1=X^\beta e^{\mu (\beta)\tau} N(d_1).~~~~~~~~~~~~~~~~~~~~~~~~~~~~~~~~~~~(QED)$$
\end{proof}

\section{Applications}

\subsection{Savings Plans that provide a choice of indexing}

In the savings plans with choice item of interest rates, the holder has the right to select one among the domestic rate and foreign rate. Thus this savings contract is an interest rate exchange option \cite{BB,ORW}.

Let $r_d$ denote the domestic (risk free) interest rate,  $r_f$ the foreign (risk free) interest rate,  $X(t)$ domestic currency / foreign currency exchange rate, $Y(t) = X(t)^{-1}$: foreign currency / domestic currency exchange rate.

\textbf{Assumption}

1) The invested quantity of money to the savings plan is 1 unit of (domestic) currency, $r_d,r_f$  are both constants. 

2) The maturity payoff is as follows:

$$V_f=V_d\cdot X^{-1}(T)=\max{e^{rd^T}X^{-1}(0)e^{r_fT}}(unit ~of~ foreign~ currency)$$
3)The exchange rate   satisfies the model of Garman-Kohlhagen \cite{MS}:

$$\frac{dX(t)}{X(t)}=\lambda \cdot dt+\sigma dW(t).$$
4) The foreign price of the option is given by a deterministic function $V_f=V_f(X,t)$.
Under the assumptions 1)-4),$V_f=V_f(X,t)$  , the foreign price function of the Savings Plans that provide a choice of indexing is the solution of the following initial value problem of PDE
\begin{equation}\label{eq-17}
\frac{\partial V_f}{\partial t}+\frac{1}{2}X^2\frac{\partial ^2V_f}{\partial X^2}+\left(\sigma ^2+r_d-r_f \right)X\frac{\partial V_f}{\partial X}-r_fV_f=0
\end{equation}
\begin{equation}\label{eq-18}
V_f(X,t)=\max \{e^{r_dT}X^{-1}(T),X^{-1}(0)e^{r_fT}\}.
\end{equation}
In fact, by Itô formula and assumption 4),
$$d\left(\frac{1}{X(t)} \right)=(-\mu X^{-1}(t)+\sigma ^2X^{-1}(t))dt-\sigma X^{-1}(t)dW(t)$$

By $\Delta$-hedging, construct a portfolio $\Pi$
$$\Pi=V_f(X,t)-\Delta \cdot X^{-1}$$
(This portfolio consists of a sheet of option, $\Delta$ units of domestic currency and its price is calculated in foreign currency.) We select $\Delta$ so that  $d\Pi =r_f dt$. By Itô formula, we have
$$dV_f=\left(\frac{\partial V_f}{\partial t}+\frac{1}{2}\sigma ^2 X^2\frac{\partial ^2V_f}{\partial X^2}+\mu X\frac{\partial V_f}{\partial X}\right)dt+\sigma X\frac{\partial V_f}{\partial X}dW.$$
Then

\begin{eqnarray}
d\Pi &=& dV_f-\Delta\left(\frac{1}{X}\right)-\Delta \cdot r_dX^{-1}dt=\nonumber
\\
&=& \left( \frac{\partial V_f}{\partial t}+\frac{1}{2}\sigma ^2 X^2\frac{\partial ^2V_f}{\partial X^2}+\mu X\frac{\partial V_f}{\partial X}\right)+\sigma X\frac{\partial V_f}{\partial X}dW-\Delta \cdot (-\mu X^{-1}+\sigma ^2X^{-1})\cdot dt-\nonumber
\\
& ~~~& -\Delta \sigma X^{-1}dW-\Delta r_dX^{-1}dt\nonumber
\\
&=& \left( \frac{\partial V_f}{\partial t}+\frac{1}{2}\sigma ^2 X^2\frac{\partial ^2V_f}{\partial X^2}+\mu X\frac{\partial V_f}{\partial X}-\Delta \cdot (-\mu X^{-1}+\sigma ^2X^{-1})-\Delta r_dX^{-1}\right)dt+\nonumber
\\
& ~~~& +\left(\sigma X\frac{\partial V_f}{\partial X}-\Delta \sigma X^{-1}\right)dW\nonumber
\\
&=& r_f\Pi dt=r_f(V_f-\Delta \cdot X^{-1})\cdot dt\nonumber
\end{eqnarray}
Thus
\begin{multline}
\left( \frac{\partial V_f}{\partial t}+\frac{1}{2}\sigma ^2 X^2\frac{\partial ^2V_f}{\partial X^2}+\mu X\frac{\partial V_f}{\partial X}-\Delta \cdot (-\mu X^{-1}+\sigma ^2X^{-1})-\Delta r_dX^{-1}\right)dt+\nonumber
\\
+\left(\sigma X\frac{\partial V_f}{\partial X}-\Delta \sigma X^{-1}\right)dW=0
\end{multline}
If we select $\Delta =X^2\frac{\partial V_f}{\partial X}$, then we have
$$\frac{\partial V_f}{\partial t}+\frac{1}{2}\sigma ^2 X^2\frac{\partial ^2V_f}{\partial X^2}+\mu X\frac{\partial V_f}{\partial X}-X^2\frac{\partial V_f}{\partial X}(\mu X^{-1}-\sigma ^2X^{-1}- r_dX^{-1}+r_fX^{-1})-r_fV_f=0$$
Thus we have the PDE model of option price.

$$\frac{\partial V_f}{\partial t}+\frac{1}{2}\sigma ^2 X^2\frac{\partial ^2V_f}{\partial X^2}+(\sigma ^2+r_d-r_f)X\frac{\partial V}{\partial X}-r_fV_f=0$$
\eqref{eq-17} is a BS-PDE with rate $r_f$ , dividend $2r_f-r_d-\sigma ^2$ and volatility $\sigma $. We can represent the solution of the problem \eqref{eq-17},\eqref{eq-18} in terms of the power binary options. $x_0=X(0)$  is the known quantity and the maturity payoff function \eqref{eq-18} is rewritten as follows.

\begin{eqnarray}
V_f(X,T) &=& \max \{e^{r_dT}X^{-1},x_0^{-1}e^{r_fT}\}=\nonumber
\\
&= & e^{r_dT}X^{-1}\cdot 1\left(e^{r_dTX^{-1}<x_0^{-1}e^{r_dT}}\right)\nonumber
\\
&=& e^{r_dT}X^{-1}\cdot 1(X<K)+x_0^{-1}\cdot 1(X>K).\nonumber
\end{eqnarray}
Here $K=x_0e^{(r_d-r_f)T}.$ Therefore by theorem 2 the option's foreign price is calculated as follow.
$$V_f(x,t)=e^{r_dT}(M^{-1})_K^{-}(x,t)+x_0^{-1}e^{r_fT}(M^0)_K^{+}(x,t)$$
From \eqref{eq-5}, we obtain the following theorem. (we considered $\mu (-1)=-r_d,\mu(0)=-r_f$)
\begin{theorem}
Under the assumptions 1~ 4, the foreign price of the savings plans that provide a choice of indexing is the solution of the problem \eqref{eq-17},\eqref{eq-18} and given by
\begin{equation}\label{eq-19}
V_f(x,t)=e^{r_dt}X^{-1}N(-d_1)+x_0^{-1}e^{r_ft}N(d_2)
\end{equation}

Here

\begin{equation}\label{eq-20}
d_1=\frac{\displaystyle{\ln \frac{x_0^{-1}e^{r_ft}}{e^{r_dtX^{-1}}}}-\frac{\sigma ^2}{2}(T-t)}{\sigma \sqrt{T-t}},d_2=\frac{\displaystyle{\ln \frac{x_0^{-1}e^{r_ft}}{e^{r_dtX^{-1}}}}+\frac{\sigma ^2}{2}(T-t)}{\sigma \sqrt{T-t}}
\end{equation}

\end{theorem}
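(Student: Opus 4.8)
The plan is to assemble \eqref{eq-19} from two ingredients already in hand above: the PDE model \eqref{eq-17}, \eqref{eq-18}, derived by $\Delta$-hedging, and the decomposition of the maturity payoff \eqref{eq-18} into power binaries. First I would record that \eqref{eq-17} is a Black--Scholes equation of the form \eqref{eq-1} with interest rate $r_f$, dividend rate $2r_f-r_d-\sigma^2$ and volatility $\sigma$, so that Theorem \ref{theorem 2} applies to its solutions. Using the equivalence $e^{r_dT}X^{-1}\ge x_0^{-1}e^{r_fT}\iff X\le K$ with $K=x_0e^{(r_d-r_f)T}$, I would rewrite the payoff as
$$V_f(X,T)=e^{r_dT}X^{-1}\cdot 1(X<K)+x_0^{-1}e^{r_fT}\cdot 1(X>K),$$
so that, by linearity of \eqref{eq-17}, $V_f=e^{r_dT}\left(M^{-1}\right)_K^{-}+x_0^{-1}e^{r_fT}\left(M^0\right)_K^{+}$, a portfolio of a lower $(-1)$-power binary and an upper cash ($0$-power) binary priced under those parameters.

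Next I would apply Theorem \ref{theorem 2} to each binary. The one small auxiliary computation is the value of the exponent $\mu$ of \eqref{eq-5} at these parameters: at $\alpha=-1$ one finds $\mu=(-2)r_f-(-1)(2r_f-r_d-\sigma^2)+\frac{\sigma^2}{2}(1+1)=-r_d$, and at $\alpha=0$ one finds $\mu=-r_f$. Hence $\left(M^{-1}\right)_K^{-}(x,t)=e^{-r_d(T-t)}x^{-1}N(-d_1)$ and $\left(M^0\right)_K^{+}(x,t)=e^{-r_f(T-t)}N(d_2)$, where $d_1$ and $d_2$ are the quantity \eqref{eq-9} evaluated at $(x/K,\,r_f,\,2r_f-r_d-\sigma^2,\,\sigma,\,\alpha,\,T-t)$ with $\alpha=-1$ and $\alpha=0$ respectively. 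Multiplying by the coefficients $e^{r_dT}$ and $x_0^{-1}e^{r_fT}$ and combining exponentials ($e^{r_dT}e^{-r_d(T-t)}=e^{r_dt}$, $e^{r_fT}e^{-r_f(T-t)}=e^{r_ft}$) already produces the shape \eqref{eq-19}.

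What remains, and where the only real risk of slipping lies, is the bookkeeping that puts $d_1$ and $d_2$ into the symmetric form \eqref{eq-20}. For $d_1$ I would substitute $r-q-\frac{\sigma^2}{2}+\alpha\sigma^2=r_f-(2r_f-r_d-\sigma^2)-\frac{\sigma^2}{2}-\sigma^2=(r_d-r_f)-\frac{\sigma^2}{2}$ together with $\ln\frac{x}{K}=\ln\frac{x}{x_0}-(r_d-r_f)T$; the $(r_d-r_f)$ terms then collapse to $-(r_d-r_f)t$, and since $\ln\frac{x_0^{-1}e^{r_ft}}{e^{r_dt}x^{-1}}=\ln\frac{x}{x_0}-(r_d-r_f)t$ this is exactly the numerator appearing in \eqref{eq-20}. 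For $d_2$ the same substitution with $\alpha=0$ replaces $(r_d-r_f)-\frac{\sigma^2}{2}$ by $(r_d-r_f)+\frac{\sigma^2}{2}$, i.e. it only flips the sign of the $\frac{\sigma^2}{2}(T-t)$ term, which is precisely how $d_2$ differs from $d_1$ in \eqref{eq-20}. Finally, since each summand solves \eqref{eq-17} with the corresponding piece of \eqref{eq-18} as terminal data and the decomposition of \eqref{eq-18} is exact, the function \eqref{eq-19} is the solution of \eqref{eq-17}, \eqref{eq-18}; so the main obstacle is purely computational --- the two $\mu$-evaluations and the $d$-simplifications --- with no conceptual content beyond the application of Theorem \ref{theorem 2}.
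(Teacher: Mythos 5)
Your proposal is correct and follows essentially the same route as the paper: recognize \eqref{eq-17} as a Black--Scholes equation with rate $r_f$, dividend $2r_f-r_d-\sigma^2$ and volatility $\sigma$, split the payoff \eqref{eq-18} at $K=x_0e^{(r_d-r_f)T}$ into a lower $(-1)$-power binary plus an upper cash binary, and apply Theorem \ref{theorem 2} with $\mu(-1)=-r_d$, $\mu(0)=-r_f$. Your explicit simplification of $d_1$ and $d_2$ to the form \eqref{eq-20} is in fact more detailed than what the paper writes out.
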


\textbf{Remark.} The financial meaning of this formula is clear.  $e^{r_dT}X^{-1}$ is the foreign price of the present value of the savings account when 1 unit (domestic currency) is saved with the domestic rate and  $x_0^{-1}e^{r_ft}$ is the present value (foreign currency) of the savings account when 1 unit (domestic currency) is saved  with foreign rate.  $N(-d_1)$ and $N(d_2)$  are the proportion of these two quantities in the option price. This proportion depends on which is bigger among them and especially at expiry date one is 0 and the other is 1. The formula \eqref{eq-19},\eqref{eq-20} is equal to that of \cite{ORW} when the inflation rate is 1.

\subsection{Geometric average Asian option}

\indent

The geometric average Asian option is an option whose expiry payoff depends on not only the terminal price of the underlying asset but also the geometrical average of the underlying asset's prices on its life time \cite{J}. The price of this option at time t depends on the underlying asset's price at time t as well as the geometrical average $J_t$ of the underlying asset's prices on the time interval $[0,t]$.

Let  $0=T_1<T_2<\cdots <T_m=T$ be discrete monitoring times and $X_n$   be the price of the underlying asset. Then the discrete geometric average of the underlying asset's prices till the time  $T_n$ is expressed by
$$J_n=J_{T_n}=\left( \prod_{i=1}^{n} X_i\right)^{\frac{1}{n}}=e^{\frac{1}{n} \sum_{i=1}^n{\ln X_i}}$$
The option with expiry payoff
$$(J_T-K)^{+}=(\sqrt[m]{X_1 \cdots X_m}-K)^{+}$$
is called an \textbf{discrete geometrical average Asian option}.

Similarly the geometrical average of the underlying asset's price to time $[0,t]$ is expressed as 
$$J_t=e^{\frac{1}{t}\int_0^t{\ln S_\tau d\tau}}$$ 
The option with expiry payoff
$$(J_T-K)^{+}=\left(e^{\frac{1}{t}\int_0^t{\ln S_\tau d\tau}}-K\right)^{+}$$
is called a \textbf{continuous geometric average Asian (call) option(with fixed exercise price)}. The price of the continuous geometric average Asian (call) option (with fixed exercise price) is the solution of the following problem 

\begin{eqnarray}\label{eq-21}
&\displaystyle{\frac{\partial V}{\partial t} +J\frac{\ln X-\ln J}{t}\frac{\partial }{\partial J}+\frac{\sigma ^2}{2}X^2\frac{\sigma ^2 V}{\partial X^2}+(r-q)X\frac{\partial V}{\partial X}-rV=0}
\\
&V(X,J,T)=(J_T-K)^{+}\nonumber
\end{eqnarray}
on the domain  $\lbrace 0\le X<\infty,0\le J<\infty,0\le t\le T\rbrace$.

\begin{lemma}\cite{J}\label{lemma}
The price of the geometric average Asian call option with fixed exercise price is given by
\begin{equation}\label{eq-22}
\displaystyle{V(X,J,t)=e^{-r(T-t)}\{ [J^tS^{T-t}]^{\frac{1}{T}}e^{\left(r_*+\frac{(\sigma ^*)^2}{2}\right)(T-t)}N(d_1^*)-KN(d_2^*) \}}
\end{equation}
Here

$$\displaystyle{d_1^*=\frac{\displaystyle{\frac{1}{T}\ln \frac{J^tX^{T-t}}{K^T}+[r^*+(\sigma ^*)^2](T-t)}}{\displaystyle{\sigma ^*\sqrt{T-t}}},d_2^*=d_1^*-\sigma ^*\sqrt{T-t}}$$
$$r^*=(r-q-\frac{\sigma ^2}{2})\frac{T-t}{2T},\sigma ^*=\frac{\sigma (T-t)}{\sqrt{3}T}$$
\end{lemma}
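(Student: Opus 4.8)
Since this formula is attributed to \cite{J}, where it is obtained by PDE methods, the plan is to reduce the degenerate terminal value problem \eqref{eq-21} to a one--dimensional problem of the same type as \eqref{eq-1}--\eqref{eq-2}, and then to read off \eqref{eq-22} from Gaussian integrals exactly as in the proofs of Theorems \ref{theorem 2} and \ref{theorem 3}. The point to exploit is that $J$ enters \eqref{eq-21} only through a transport (first order) term, so it should be possible to eliminate it by a suitable change of variables.

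First I would pass to logarithmic variables $u=\ln X$, $v=\ln J$, under which $X\partial_X=\partial_u$, $X^2\partial_X^2=\partial_u^2-\partial_u$, $J\partial_J=\partial_v$, so that \eqref{eq-21} becomes
$$\partial_t V+\frac{u-v}{t}\,\partial_v V+\frac{\sigma^2}{2}\,\partial_u^2 V+\left(r-q-\frac{\sigma^2}{2}\right)\partial_u V-rV=0,\qquad V(u,v,T)=(e^{v}-K)^{+}.$$
The essential step is to introduce the combined variable
$$\zeta=\frac{t\,v+(T-t)\,u}{T}=\frac{1}{T}\ln\left(J^{t}X^{T-t}\right),$$
which has the property $\zeta|_{t=T}=v$, so that the terminal payoff depends on $\zeta$ alone. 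Seeking $V=W(\zeta,t)$ and computing $\partial_t V=W_t+\frac{v-u}{T}W_\zeta$, $\partial_v V=\frac{t}{T}W_\zeta$, $\partial_u V=\frac{T-t}{T}W_\zeta$, $\partial_u^2 V=\left(\frac{T-t}{T}\right)^2 W_{\zeta\zeta}$ and substituting, the two first order terms in $\zeta$ cancel because $\frac{v-u}{T}+\frac{u-v}{t}\cdot\frac{t}{T}=0$, and $W$ solves
$$W_t+\frac{\sigma^{2}(T-t)^{2}}{2T^{2}}\,W_{\zeta\zeta}+\frac{\left(r-q-\frac{\sigma^{2}}{2}\right)(T-t)}{T}\,W_\zeta-rW=0,\qquad W(\zeta,T)=(e^{\zeta}-K)^{+}.$$
By uniqueness of the solution of \eqref{eq-21}, this reduction is legitimate.

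It then remains to solve this scalar problem. I would remove the zeroth order term by $W=e^{-r(T-t)}\widetilde W$ and the first order term by the shift $\zeta=\eta-\frac{(r-q-\sigma^{2}/2)(T-t)^{2}}{2T}$, after which (in the variable $\eta$) one obtains a heat equation with time--dependent diffusion coefficient $\sigma^{2}(T-t)^{2}/T^{2}$ and terminal data $(e^{\eta}-K)^{+}$; its solution is the Gaussian convolution of the payoff with total variance $\int_{t}^{T}\frac{\sigma^{2}(T-s)^{2}}{T^{2}}\,ds=\frac{\sigma^{2}(T-t)^{3}}{3T^{2}}=(\sigma^{*})^{2}(T-t)$, of the form analogous to \eqref{eq-3}. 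Writing $(e^{\zeta}-K)^{+}=e^{\zeta}\cdot 1(e^{\zeta}>K)-K\cdot 1(e^{\zeta}>K)$ and evaluating the two Gaussian integrals as in the proof of Theorem \ref{theorem 2} (with $\alpha=1$ and $\alpha=0$), one obtains the drift correction $r^{*}(T-t)$ with $r^{*}=\frac{(r-q-\sigma^{2}/2)(T-t)}{2T}$ together with the factors $N(d_1^{*})$ and $N(d_2^{*})$; undoing the substitutions and using $e^{\zeta}=(J^{t}X^{T-t})^{1/T}$ yields \eqref{eq-22}. Equivalently one may argue probabilistically: under the risk neutral measure $\ln J_T=\frac{t}{T}\ln J+\frac{1}{T}\int_{t}^{T}\ln S_\tau\,d\tau$ is Gaussian with the mean and variance just computed, and \eqref{eq-22} is the corresponding discounted lognormal call value.

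The only genuinely non--routine point is the first one: recognising the variable $\zeta$ that decouples the $J$--direction and checking that the transport term then vanishes. Once that is done, the remaining work is the bookkeeping of the time--dependent coefficients, which is of the same nature as the integrals already carried out in Section 2.
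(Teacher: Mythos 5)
Your proposal is correct. Note that the paper itself gives no proof of this lemma --- it is stated with only the citation to \cite{J} --- so there is nothing internal to compare against; your derivation is the standard similarity reduction used in that reference, and it checks out: the transport term cancels because $\tfrac{v-u}{T}+\tfrac{u-v}{t}\cdot\tfrac{t}{T}=0$, the accumulated variance $\int_t^T\sigma^2(T-s)^2T^{-2}\,ds=\sigma^2(T-t)^3/(3T^2)=(\sigma^*)^2(T-t)$ and accumulated drift $(r-q-\tfrac{\sigma^2}{2})(T-t)^2/(2T)=r^*(T-t)$ match the constants in \eqref{eq-22}, and the lognormal-call evaluation then reproduces $d_1^*$, $d_2^*$ exactly (the $S$ in \eqref{eq-22} is evidently a typo for $X$). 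The only thing I would add for completeness is a word on why the ansatz $V=W(\zeta,t)$ captures \emph{the} solution of \eqref{eq-21}: the terminal datum depends on $\zeta$ alone and $W$ solves the reduced equation, so $W(\zeta(u,v,t),t)$ is a solution of the full degenerate problem, and one then invokes uniqueness for \eqref{eq-21} (which you assert but which deserves a sentence, since the equation is degenerate in $J$ and has the $1/t$ singularity at $t=0$).
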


\indent

We now establish the differential equation model of the price of the discrete geometric average Asian option. let $0=T_1<\cdots <T_n=T$  be fixed monitoring times and $X_1,\cdots ,X_n$  be the price of the asset at time $T_1,\cdots ,T_n$  respectively. The price of discrete geometric average Asian option with fixed price with n monitoring times is denoted by $V_n(X,t)$ . By definition the expiry payoff is equal to $V_n(X_n,T_n)=\left( \sqrt[n]{X_1 \cdots X_n}-K\right)^{+}$. The price of the option on the time interval $T_i <t \le T_i+1$ is denoted by $V_n^i(X,t),i=1,\cdot ,n-1$ . Since in the time interval $T_{n-1}<t\le T_n$   the price of asset $X_1 ,\cdots ,X_{n-1}$  is known quantities, the expiry payoff can be written as

\begin{equation}\label{eq-23}
V_n^{n-1}(X,T_n)=\left(\sqrt[n]{X_1\cdots X_{n-1}}X^{\frac{1}{n}}-K \right)
\end{equation}
On the interval  $T_{n-1}<t\le T_n$, this can be seen as a normal vanilla option with expiry payoff \eqref{eq-23} and thus in this interval, the price of discrete geometrical average Asian option is the solution to the terminal value problem of Black-Scholes equation
$$LV_n^{n-1}=\frac{\partial V_n^{n-1}}{\partial t}+\frac{1}{2}\sigma ^2X^2\frac{\partial ^2V_n^{n-1}}{\partial X^2}+(r-q)X\frac{\partial V_n^{n-1}}{\partial X}-rV_n^{n-1}=0,(X>0,T_{n-1}<t<T_n),$$

\begin{equation}\label{eq-24}
V_n^{n-1}(X,T_n)=\left( \sqrt[n]{X_1 \cdots X_{n-1}}X^{\frac{1}{n}}-K\right)^{+}
\end{equation}

This price $V_n^{n-1}(X,t)$  depends $X,t$  as well as $X_1,\cdots ,X_{n-1}$ and so $V_n^{n-1}$  can be written as $V_n^{n-1}(X,t;X_1,\cdots X_{n-1})$.  Since $X=X_{n-1}$  at time $T_{n-1}$, especially  the option price is $V_n^{n-1}(X,T_n-1;X_1,\cdots ,X_{n-2},X)$

On the interval $T_{n-2}<t\le T_{n-1}$, the prices $X_1,\cdots ,X_{n-2}$, at the monitoring time $T_1,\cdots ,T_{n-2}$  at $T_{n-2}<t\le T_{n-1}$  is the known quantities and thus at this interval the option becomes a vanilla option (the solution of BS pde) whose payoff is $V_n^{n-1}(X,T_n-1;X_1,\cdots ,X_{n-2},X)$ . The price depends on  $X,t$ as well as $X_1,\cdots ,X_{n-2}$ . By repeating this process we get
\begin{equation}\label{eq-25}
LV_n^i=0~~~~~~~~~~~~(X>0,T_i<t<T_{i+1})
\end{equation}
$$V_n^i(X,T_i)=V_n^{i+1}(X,T_i;X_1\cdots ,X_i,X),i=\bar{1,n-2}$$
The model of the price of discrete geometric average Asian option with fixed exercise price with n monitoring times is \eqref{eq-24},\eqref{eq-25}.
\begin{theorem} \label{the-5}
The price of discrete geometric average Asian options with fixed exercise strike price with  n monitoring times is expressed by
$$V_n^{n-k}(X,t)=\sqrt[n]{X_1\cdots X_{n-k}}X^{\frac{k}{n}}e^{\theta _k(t)}N(d_2^{n-k}(t)+\frac{1}{n}\Delta _k(t))-Ke^{-r(T_n-t)}N(d_2^{n-k}(t))$$

\begin{eqnarray}\label{eq-26}
T_{n-k}\le t<T_{n-k+1},(k=\bar{1,n-1})
\end{eqnarray}
Here

$$\Delta _k(t)=\sigma \sqrt{k^2(T_{n-k+1}-t)+\sum_{i=1}^{k-1}i^2(T_{n-i+1}-T_i))}$$
$$d_2^{n-k}(X,t)=\Delta _{n-k}^{-1}\{\ln\frac{X_1\cdots X_{n-k}X^k}{K^n}+\left(r-q-\frac{\sigma ^2}{2}\right)[k(T_{n-k+1}-t)+\sum_{i=1}^{k-1}i(T_{n-i+1}-T_{n-i})]\}$$
\begin{eqnarray*}
\theta _k(t) &=& \mu (\frac{k}{n})(T_{n-k+1}-t)+\sum_{i=1}^{k-1}\mu \left(\frac{i}{n}(T_{n-i+1}-T_i)\right)
\\
&=& \mu(\frac{1}{n})(T_n-T_{n-1})+\mu (\frac{2}{n})(T_{n-1}-T_{n-2})+\cdots +\mu (\frac{k}{n})(T_{n-k+1}-t)
\end{eqnarray*}
Especially at time  $T_1=0$ we have
$$V_n^1(X,0)=Xe^{\theta _{n-1}(T_1)}N(d_2+\frac{1}{n}\Sigma)-Ke^{-r(T_n-t)}N(d_2)$$
Here

$$\Sigma =\Delta_{n-1}(0)\sigma \sqrt{\sum_{i=1}^{n-1}(i^2(T_{n-k+1}-T_{n-i}))}$$
$$d_2=d_2^1(X,T_1)=\Sigma ^{-1}\left[\ln \frac{X^n}{K^n}+(r-q-\frac{\sigma ^2}{2})\sum_{i=1}^{n-i}{i(T_{n-i+1}-T_{n-i})}\right]$$
$$\theta _{n-1}(T_1)=\sum_{i=1}^{n-1}{\mu (\frac{i}{n})(T_{n-i+1-T_{n-i}})}$$

\end{theorem}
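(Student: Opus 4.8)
The plan is to prove the formula by \textbf{backward induction on} $k$, moving through the monitoring intervals one at a time. At each step the price on $(T_{n-k},T_{n-k+1}]$ is assumed to be in the asserted closed form; evaluating it at the left endpoint $T_{n-k}$ and replacing $X_{n-k}$ by the running variable $X$ gives, via the connection conditions \eqref{eq-25}, the terminal payoff of the problem on the preceding interval $(T_{n-k-1},T_{n-k}]$. That payoff turns out to be a difference of two payoffs of the type \eqref{eq-11}, so Theorem 3 produces the next price explicitly; the very first interval is handled directly by Theorem 2.

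\textbf{Base case $k=1$.} On $(T_{n-1},T_n]$ the prices $X_1,\dots,X_{n-1}$ are already known, so \eqref{eq-24} is a terminal value problem of \eqref{eq-1} with payoff $(\sqrt[n]{X_1\cdots X_{n-1}}\,X^{1/n}-K)^{+}$. Since the exponent $1/n$ is positive, the exercise set is $\{X>\kappa\}$ with $\kappa=K^{n}/(X_1\cdots X_{n-1})$, and the payoff splits as a $\tfrac1n$-power upper binary plus a cash ($=0$-power) upper binary:
$$\sqrt[n]{X_1\cdots X_{n-1}}\,X^{1/n}\cdot 1(X>\kappa)\ -\ K\cdot 1(X>\kappa).$$
Theorem 2 evaluates both (using $\mu(0)=-r$, so the cash piece carries the discount $e^{-r(T_n-t)}$), and reading off the result gives the asserted formula for $k=1$, with $\Delta_1(t)=\sigma\sqrt{T_n-t}$, $\theta_1(t)=\mu(\tfrac1n)(T_n-t)$, and the stated $d_2^{n-1}$ (the inner sums being empty).

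\textbf{Inductive step.} Assume the formula for $V_n^{n-k}$. Substituting $X_{n-k}=X$ in $V_n^{n-k}(X,T_{n-k};X_1,\dots,X_{n-k-1},X)$ turns the prefactor $\sqrt[n]{X_1\cdots X_{n-k}}\,X^{k/n}$ into $\sqrt[n]{X_1\cdots X_{n-k-1}}\,X^{(k+1)/n}$ and raises the power of $X$ in the normal arguments from $k$ to $k+1$; by linearity of \eqref{eq-1} this payoff is a constant times $X^{(k+1)/n}N[\delta(X^{k+1}/K',\tau_1,\tau_1',\alpha)]$ minus a constant times $X^{0}N[\delta(X^{k+1}/K',\tau_1,\tau_1',\alpha')]$ — a difference of two payoffs of the form \eqref{eq-11} with $i=k+1$ and $\beta=(k+1)/n$ resp.\ $\beta=0$. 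Applying Theorem 3 with $\tau=T_{n-k}-t$ and invoking \eqref{eq-12}--\eqref{eq-13}, the accumulating quantities behave additively: $\tau_1'$ gains $(k+1)^2(T_{n-k}-t)$, $\tau_1$ gains $(k+1)(T_{n-k}-t)$, $\alpha\tau_1$ gains $(k+1)\beta(T_{n-k}-t)$, and the prefactor gains $e^{\mu(\beta)(T_{n-k}-t)}$. These are exactly the recursions
$$\Delta_{k+1}(t)^2=\sigma^{2}(k+1)^{2}(T_{n-k}-t)+\Delta_k(T_{n-k})^{2},\qquad \theta_{k+1}(t)=\mu\!\left(\tfrac{k+1}{n}\right)(T_{n-k}-t)+\theta_k(T_{n-k}),$$
together with the telescoping update of the numerator of $d_2$, so $V_n^{n-k-1}$ has the asserted form with $k$ replaced by $k+1$. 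Moreover, the asset piece keeps $\alpha\tau_1$ exactly $\tfrac1n\tau_1'$ larger than the cash piece's (true at the base step, and preserved because the two gains scale like $\tfrac1n$), so the asset argument always equals the cash argument plus $\tfrac1n\Delta_k(t)$, which is why the two $N(\cdot)$'s in the formula line up. Taking finally $k=n-1$ and $t=T_1=0$, and noting $X=X_1$ there so that $\sqrt[n]{X_1}\,X^{(n-1)/n}=X$, gives the stated expression for $V_n^1(X,0)$.

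\textbf{Main obstacle.} The one genuinely delicate part is this parameter bookkeeping: one must check that iterating Theorem 3 telescopes $\tau_1'$, $\tau_1$ and $\alpha\tau_1$ precisely into the sums $\sum_i i^{2}(T_{n-i+1}-T_{n-i})$, $\sum_i i(T_{n-i+1}-T_{n-i})$ and $\sum_i\mu(i/n)(T_{n-i+1}-T_{n-i})$ that appear in $\Delta_k$, $d_2^{n-k}$ and $\theta_k$, being careful that the index fed into \eqref{eq-13} at the $k$-th step is $i=k+1$ (so the leading coefficient is $(k+1)^2$, matching $\Delta_{k+1}$). Once the invariant $\alpha_{\mathrm{asset}}\tau_1-\alpha_{\mathrm{cash}}\tau_1=\tfrac1n\tau_1'$ is established and $\mu(0)=-r$ is used for the cash piece, the rest is direct substitution into Theorems 2 and 3.
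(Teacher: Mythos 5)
Your proposal is correct and follows essentially the same route as the paper: a backward induction over the monitoring intervals, with the first interval handled by decomposing the payoff into a $\tfrac1n$-power binary minus a cash binary (Theorem 2), and each subsequent step handled by recognizing the connection-time payoff as a difference of two payoffs of the form \eqref{eq-11} with $i=k+1$, $\beta=(k+1)/n$ (asset piece) and $\beta=0$ (cash piece) and applying Theorem 3. Your explicit telescoping recursions for $\Delta_k$, $\theta_k$ and the invariant $\alpha\tau_1=\tfrac1n\tau_1'$ for the asset piece are exactly the bookkeeping the paper carries out.
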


\begin{proof}
 First we find a solution of \eqref{eq-24}.The expiry payoff of \eqref{eq-24} can be written as

\begin{eqnarray*}
V_n^{n-1}(X,T_n) &=& (\sqrt[n]{X_1\cdots X_{n-1}}X^{\frac{1}{n}}-K)^{+}
\\
&=& (\sqrt[n]{X_1\cdots X_{n-1}}X^{\frac{1}{n}}-K)\cdot 1(\sqrt[n]{X_1\cdots X_{n-1}}X^{\frac{1}{n}}>K)
\\
&=& (\sqrt[n]{X_1\cdots X_{n-1}}X^{\frac{1}{n}}-K)\cdot 1\left(X>\frac{K^n}{X_1\cdots X_{n-1}}\right).
\end{eqnarray*}
This is expressed by the combination of the prices of power binary options and thus from Theorem \ref{theorem 2} for $T_{n-1}<t<T_n$  we have
\begin{eqnarray}
V_n^{n-1}(X,t;T_n) &=& \sqrt[n]{X_1\cdots X_{n-1}} \left(M^{\frac{1}{n}} \right)_{\frac{K^n}{X_1\cdots X_{n-1}}}^{+}(X,t;T_n)-K(M^0)_{\frac{K^n}{X_1\cdots X_{n-1}}}^{+}(T,t;T_n)\nonumber
\\
&=&  \sqrt[n]{X_1\cdots X_{n-1}}e^{\mu (\frac{1}{n})(T_n-t)}N(d_1^n(X,t))-Ke^{-r(T_n-t)}N(d_2^n(X,t))
\end{eqnarray}
Here
\begin{equation}
d_2^{n-1}(X,t)=\left[\ln \frac{X_1\cdots X_{n-1}}{K^n}+\left(r-q-\frac{\sigma ^2}{2}\right)(T_n-t)\right](\sigma \sqrt{T_n-t})^{-1}\nonumber
\end{equation}
\begin{eqnarray}
d_1^{n-1}(X,t) &=& \left[\ln \frac{X_1\cdots X_{n-1}}{K^n}+\left(r-q-\frac{\sigma ^2}{2}+\frac{\sigma ^2}{n}\right)(T_n-t)\right](\sigma \sqrt{T_n-t})^{-1}\nonumber
\\
&=& d_2^{n-1}(X,t)+\frac{1}{n}\sigma \sqrt{T_n-t}=d_2^{n-1}(X,t)+\frac{1}{n}\Delta _1(t)\nonumber
\end{eqnarray}
Thus \eqref{eq-26} has been proved for $T_{n-1}<t<T_n$

Now we assume that \eqref{eq-26} holds on the interval $T_{n-k}<t<T_{n-k+1}$   and find the price formula on the interval $T_{n-k-1}<t<T_{n-k}$. On the interval $T_{n-k-1}<t<T_{n-k}$,  $X_1,\cdots,X_{n-k-1}$  is known quantities and especially $X=X_{n-k}$  at time $T_{n-k}$ . Thus at time $T_{n-k}$, we have

$V_n^{n-k}(X,T_{n-k};T_{n-k+1},T_{n-1},T_n)=$
\begin{eqnarray*}
&=&  \sqrt[n]{X_1\cdots X_{n-k-1}}X^{(k+1)/n}e^{\theta _k(n-k)}N(d_1^{n-k}(X,T_{n-k}))-ke^{-r(T_n-t)}N(d_2^{n-k}(X,T_{n-k}))
\\
&=& f(X)-g(X)
\end{eqnarray*}
Here
\begin{eqnarray*}
d_2^{n-k}(X,T_{n-k}) &=& \left\{ \ln \frac{X_1\cdots X_{n-k-1}X^{k+1}}{K^n}+(r-q-\frac{\sigma ^2}{2})\left[ \sum_{i=1}^k{i(T_{n-k+1}-T_{n-i})}\right]\right\}
\\
&\cdot& \left(\sigma \sqrt {\sum_{i=1}^{k}{i^2(T_{n-i+1}-T_{n-i})}}\right)
\\
&=& \delta \left(\frac{X_1\cdots X_{n-k-1}X^{k+1}}{K^n},\sum_{i=1}^k{i(T_{n-k+1}-T_{n-i})},0,\sum_{i=1}^{k}{i^2(T_{n-i+1}-T_{n-i})}\right)
\end{eqnarray*}
\begin{eqnarray*}
d_1^{n-k}(X,T_{n-k})&=& d_2^{n-k}(X,T_{n-k})+\frac{1}{n}\Delta _k(T_{n-k}) 
\\
&=& \big\{ \ln \frac{X_1\cdots X_{n-k-1}X^{k+1}}{K^n}+(r-q-\frac{\sigma ^2}{2})\left[ \sum_{i=1}^k{i(T_{n-k+1}-T_{n-i})}\right]
\\
&+& \frac{\sigma ^2}{n}\sum_{i=1}^k{i^2(T_{n-i+1}-T_{n-i})} \big\} \cdot \left(\sigma \sqrt {\sum_{i=1}^{k}{i^2(T_{n-i+1}-T_{n-i})}}\right)
\\
&=& \delta \left(\frac{X_1\cdots X_{n-k-1}X^{k+1}}{K^n},\sum_{i=1}^k{i(T_{n-k+1}-T_{n-i})},\frac{\frac{1}{n}\sum_{i=1}^{k}{i^2(T_{n-i+1}-T_{n-i})}}{\sum_{i=1}^k{i(T_{n-k+1}-T_{n-i})}},\sum_{i=1}^{k}{i^2(T_{n-i+1}-T_{n-i})}\right)
\end{eqnarray*}

where $\delta$ is defined by \eqref{eq-11}. And from \eqref{eq-25},  $V_n^{n-k-1}(X,t)$ is the solution of \eqref{eq-1} with expiry payoff $f(X)-g(X)$. Now in order to find the solution  $V_f(X)$ of \eqref{eq-1} with the expiry payoff $f(X)$ ,we let
$$\tau _1=\sum_{i=1}^k{i(T_{n-i+1}-T_{n-i})},\tau _2=T_{n-k}-t,\tau_1'=\sum_{i=1}^k{i^2(T_{n-i+1}-T_{n-i})}$$
$$\alpha =\frac{\frac{1}{n}\sum_{i=1}^{k}{i^2(T_{n-i+1}-T_{n-i})}}{\sum_{i=1}^k{i(T_{n-k+1}-T_{n-i})}},\beta =\frac{k+1}{n},i=k+1$$

and apply Theorem \ref{theorem 3}. Thus we have
$$V_f(X,t)=\sqrt[n]{X_1\cdots X_{n-k-1}}X^{(k+1)/n}e^{\theta _{k+1}(t)}N(d_1^{n-k-1})$$
Here
\begin{eqnarray*}
d_1^{n-k-1}(X,t)&=& \bigg\{ \ln \frac{X_1\cdots X_{n-k-1}X^{k+1}}{K^n}+(r-q-\frac{\sigma ^2}{2})\left[(k+1)(T_{n-k}-t)+\sum_{i=1}^k{i(T_{n-i+1}-T_{n-i})}\right]+
\\
&+& \frac{\sigma ^2}{n}\left[(k+1)^2(T_{n-k}-t)+\sum_{i=1}^{k}{i^2(T_{n-i+1}-T_{n-i})}\right]\bigg\}
\\
&\cdot & \left(\sigma \sqrt{(k+1)^2(T_{n-k}-t)+\sum_{i=1}^k}{i^2(T_{n-i+1}-T_{n-i})}\right)^{-1}
\end{eqnarray*}
$\theta _{k+1}(t)=\mu (\frac{k+1}{n})(T_{n-k+1}-t)+\sum_{i=1}^k{\mu (\frac{i}{n})(T_{n-i+1}-T_{n-i})}$

In order to find the solution $V_g(X,t)$ of \eqref{eq-1} with expiry payoff $g(X)$ ,we let
$$\tau _1=\sum_{i=1}^k{i(T_{n-i+1}-T_{n-i})},\tau _2=T_{n-k}-t,\tau _1'=\sum_{i=1}^k{i^2(T_{n-i+1}-T_{n-i})}$$
$$\alpha =0,\beta =0,i=k+1$$
and apply Theorem \ref{theorem 3}. Thus we have
$$V_g(X,t)=Ke^{-r(T_n-t)}N(d_2^{n-k-1})$$
Here
\begin{eqnarray*}
d_2^{n-k-1}(X,t)&=& \bigg\{ \ln\frac{X_1\cdots X_{n-k-1}X^{k+1}}{K^n}+(r-q-\frac{\sigma ^2}{2})\left[(k+1)(T_{n-k}-t)+\sum_{i=1}^k{i(T_{n-i+1}-T_{n-i})}\right]\bigg\}
\\
&\cdot & \left(\sigma \sqrt{(k+1)^2(T_{n-k}-t)+\sum_{i=1}^{k}{i^2(T_{n-i+1}-T_{n-i})}}\right)^{-1}
\end{eqnarray*}
So we have
\begin{eqnarray*}
V_n^{n-k-1}(X,t) &=& V_f(X,t)-V_g(X,t)
\\
&=& \sqrt[n]{X_1\cdots X_{n-k-1}}X^{(k+1)/n}e^{\theta _{k+1}(t)}N(d_1^{n-k-1})-Ke^{-r(T_n-t)}N(d_2^{n-k-1})
\end{eqnarray*}
Thus we have proved \eqref{eq-26}.(QED)
\end{proof}

And now let's find the limit of the prices of discrete geometric averageAsian options as discrete monitoring interval goes to zero. Let $V_n(X,t)$   be the price of discrete geometric average Asian options with fixed exercise price and n monitoring times. Then we have 
$$\forall t\in [0,T),\exists k\in \{1,\cdots ,n-1\},T_{n-k}\le t<T_{n-k+1}:V_n(X,t)=V_n^{n-k}(X,t)$$
And let $V(X,J,t)$  be the price of continuous geometric average Asian call option with fixed exercise price and $J(t)=e^{\frac{1}{t}\int_{0}^{t}{\ln X_\tau d\tau}}$.Then we have the following covergence theorem.

\begin{theorem}\label{the-6}
As the maximum length of the subintervals of the partition goes to zero and let n go to infinity, then the price of \eqref{eq-26} of discrete geometric average Asian call options converges to the price of \eqref{eq-22} of continuous geometric average Asian options. That is, we have
$$\lim_{n \rightarrow \infty}{V_n{(X,~t)}}=V(X,J,t),\forall t\in [0,T)$$
\end{theorem}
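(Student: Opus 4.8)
The plan is to pass to the limit term by term in the closed form \eqref{eq-26} for $V_n^{n-k}(X,t)$ supplied by Theorem \ref{the-5}, and to match each limit with the corresponding factor in the Black--Scholes-type formula \eqref{eq-22} of Lemma \ref{lemma}. Fix $t\in[0,T)$ and, for each $n$, let $k=k(n)$ be the unique index with $T_{n-k}\le t<T_{n-k+1}$; since the mesh $\max_i(T_{i+1}-T_i)\to0$ we have $T_{n-k}\to t$, $T_{n-k+1}\to t$, and (for uniformly spaced monitoring times $T_i=(i-1)T/(n-1)$, which I take to be the intended setting) $(n-k)/n\to t/T$ and $k/n\to (T-t)/T$. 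Throughout I would use that, along a continuous price path, a sum of the form $\sum_i g(i/n)\big(T_{n-i+1}-T_{n-i}\big)$ over the sub-intervals of $[t,T]$ is a Riemann sum converging to $\int_t^T g\big((T-\tau)/T\big)\,d\tau$, the sub-interval abutting $t$ being truncated to $[t,T_{n-k+1}]$, whose length is at most the mesh and hence contributes a vanishing error.

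First I would treat the prefactor: writing $\sqrt[n]{X_1\cdots X_{n-k}}\,X^{k/n}=\exp\big(\tfrac1n\sum_{i=1}^{n-k}\ln X_i+\tfrac kn\ln X\big)$ and recognizing $\tfrac1n\sum_{i=1}^{n-k}\ln X_{T_i}$ as a Riemann sum for $\tfrac1T\int_0^t\ln X_\tau\,d\tau=\tfrac tT\ln J_t$, one gets $\sqrt[n]{X_1\cdots X_{n-k}}\,X^{k/n}\to\exp\big(\tfrac tT\ln J_t+\tfrac{T-t}{T}\ln X\big)=[J_t^{\,t}X^{T-t}]^{1/T}$. Next the exponent: $\theta_k(t)=\sum_{i=1}^{k-1}\mu(\tfrac in)(T_{n-i+1}-T_{n-i})+\mu(\tfrac kn)(T_{n-k+1}-t)$ is a Riemann sum for $\int_t^T\mu\big((T-\tau)/T\big)\,d\tau$, and the substitution $u=(T-\tau)/T$ together with $\mu(u)=u(r-q-\tfrac{\sigma^2}{2})-r+\tfrac{\sigma^2}{2}u^2$ gives, after an elementary integration, $\theta_k(t)\to -r(T-t)+\big(r^*+\tfrac{(\sigma^*)^2}{2}\big)(T-t)$, where $r^*,\sigma^*$ are exactly the quantities of Lemma \ref{lemma}; this is the one spot where one verifies the algebra matches, using $r^*=(r-q-\tfrac{\sigma^2}{2})\tfrac{T-t}{2T}$ and $(\sigma^*)^2=\tfrac{\sigma^2(T-t)^2}{3T^2}$.

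Then I would handle the arguments of $N$. The key sub-limits are $\tfrac1{n}\ln\frac{X_1\cdots X_{n-k}X^k}{K^n}\to\tfrac1T\ln\frac{J_t^{\,t}X^{T-t}}{K^T}$; $\tfrac1n\sum_{i=1}^{k}i\,(T_{n-i+1}-T_{n-i})\to\int_t^T\tfrac{T-\tau}{T}d\tau=\tfrac{(T-t)^2}{2T}$; and $\tfrac1{n^2}\sum_{i=1}^{k}i^2(T_{n-i+1}-T_{n-i})\to\int_t^T\big(\tfrac{T-\tau}{T}\big)^2 d\tau=\tfrac{(T-t)^3}{3T^2}$, the last giving $\tfrac1n\Delta_k(t)=\tfrac\sigma n\sqrt{\textstyle\sum i^2\Delta_i'}\to\sigma\,\tfrac{(T-t)^{3/2}}{\sqrt3\,T}=\sigma^*\sqrt{T-t}$. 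Dividing numerator by denominator in $d_2^{\,n-k}(X,t)$ and using $(r-q-\tfrac{\sigma^2}{2})\tfrac{(T-t)^2}{2T}=r^*(T-t)$ yields $d_2^{\,n-k}(X,t)\to d_2^*$, whence $d_2^{\,n-k}(X,t)+\tfrac1n\Delta_k(t)\to d_2^*+\sigma^*\sqrt{T-t}=d_1^*$. Assembling these and invoking continuity of $N$, of $\exp$, and of multiplication, the first term of \eqref{eq-26} tends to $[J_t^{\,t}X^{T-t}]^{1/T}e^{-r(T-t)}e^{(r^*+(\sigma^*)^2/2)(T-t)}N(d_1^*)$ and the second to $Ke^{-r(T-t)}N(d_2^*)$, whose difference is exactly $V(X,J_t,t)$ in \eqref{eq-22}.

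The main obstacle is not any single limit but the bookkeeping that makes the Riemann-sum arguments rigorous and uniform: one must (i) justify that the $\tfrac1n$-normalized partial sums of $\ln X_{T_i}$ converge to $\tfrac1T\int_0^t\ln X_\tau\,d\tau$, which genuinely requires (asymptotic) equidistribution of the monitoring times rather than only $|\pi_n|\to0$, so this hypothesis ought to be stated explicitly; (ii) control the truncated sub-interval $[t,T_{n-k+1}]$ and the gap between $k/n$ and $(T-t)/T$ so that all error terms are $o(1)$; and (iii) reconcile the (slightly inconsistent) indexing in Theorem \ref{the-5} — e.g.\ $\sum i^2(T_{n-i+1}-T_i)$ versus $\sum i^2(T_{n-i+1}-T_{n-i})$, and $\Delta_{n-k}$ versus $\Delta_k$ — before the comparison with Lemma \ref{lemma} goes through cleanly. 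Once the normalization is pinned down, the rest is continuity of elementary functions applied to convergent Riemann sums.
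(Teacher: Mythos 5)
Your proposal is correct and follows essentially the same route as the paper's proof: fix $t$, identify $k(n)$ with $T_{n-k}\le t<T_{n-k+1}$ so that $k/n\to(T-t)/T$, show $\sqrt[n]{X_1\cdots X_{n-k}}\to J(t)^{t/T}$, pass to the limit in the sums appearing in $d_1^{n-k}$, $d_2^{n-k}$, $\Delta_k$ and $\theta_k$, and conclude by continuity of $N$ and $\exp$. The only cosmetic difference is that you phrase the intermediate limits as Riemann sums converging to integrals, whereas the paper (working with the uniform partition) evaluates the sums in closed form via $\sum i=\tfrac{k(k-1)}{2}$ and $\sum i^2=\tfrac{k(k-1)(2k-1)}{6}$; your observation that the equal-weight geometric average genuinely requires an asymptotically uniform partition (not merely mesh $\to 0$) is a fair point that the paper glosses over with its ``for simplicity'' reduction.
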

\begin{proof}
For simplicity of discussion, fix time $t(0<t<T)$ and assume that $\{0=T_1,T_2,\cdots ,T_n=T\}$  is the partition of $[0,T]$ wiht (n-1) subintervals with same length. (In the case with any partition, the convergence can be proved in the same way.) Then $\exists k,T_{n-k}\le t<T_{n-k+1}$ and we have
\begin{eqnarray*}
V_n(X,t)&=& V_n^{n-k}(X,t;T_{n-k+1},\cdots ,T_{n-1},T_n)=
\\
&=& \sqrt[n]{X_1\cdots X_{n-k}}X^{k/n}e^{\theta _k}N(d_1^{n-k})+Ke^{-r(T_n-t)}N(d_2^{n-k})
\end{eqnarray*}
Here for convenience   $d_1^{n-k}(X,t)$ and $d_2^{n-k}(X,t)$ can be rewritten as following.

\begin{eqnarray*}
d_1^{n-k}(X,t) &=& \bigg\{ \ln\frac{X_1\cdots X_{n-k}X^k}{K^n}+(r-q-\frac{\sigma ^2}{2})\bigg[k(T_{n-k+1}-t)+\sum_{i=1}^{k-1}{i(T_{n-i+1}-T_{n-i})}\bigg]+
\\
&+& \frac{\sigma ^2}{n}\left[k^2(T_{n-k+1}-t)+\sum_{i=1}^{k-1}{i^2(T_{n-i+1-T_{n-i}})}\right]\bigg\}\cdot
\\
&\cdot &\left(\sigma \sqrt{k^2(T_n-k+1)-t}+\sum_{i=1}^{k-1}{i^2(T_{n-i+1}-T_{n-i})}\right)^{-1}
\\
&=& \bigg\{\frac{n}{k} \ln\frac{\sqrt[n]{X_1\cdots X_{n-k}}X^{k/n}}{K^n}+(r-q-\frac{\sigma ^2}{2})\bigg[(T_{n-k+1}-t)+\frac{1}{k}\sum_{i=1}^{k-1}{i(T_{n-i+1}-T_{n-i})}\bigg]+
\\
&+& \frac{k\sigma ^2}{n}\left[(T_{n-k+1}-t)+\frac{1}{k^2}\sum_{i=1}^{k-1}{i^2(T_{n-i+1-T_{n-i}})}\right]\bigg\}\cdot
\\
&\cdot &\left(\sigma \sqrt{(T_n-k+1)-t}+\frac{1}{k^2}\sum_{i=1}^{k-1}{i^2(T_{n-i+1}-T_{n-i})}\right)^{-1}
\end{eqnarray*}
\begin{eqnarray*}
d_2^{n-k}(X,t) &=& \bigg\{\frac{n}{k} \ln\frac{\sqrt[n]{X_1\cdots X_{n-k}}X^{k/n}}{K^n}+(r-q-\frac{\sigma ^2}{2})\bigg[(T_{n-k+1}-t)+\frac{1}{k}\sum_{i=1}^{k-1}{i(T_{n-i+1}-T_{n-i})}\bigg]\bigg\}\cdot
\\
&\cdot &\left(\sigma \sqrt{(T_n-k+1)-t}+\frac{1}{k^2}\sum_{i=1}^{k-1}{i^2(T_{n-i+1}-T_{n-i})}\right)^{-1}
\end{eqnarray*}
Then we have
$$T_{n-k}\le t<T_{n-k+1}\leftrightarrow \frac{n-k-1}{n-1}T\le <t<\frac{n-k}{n-1}T\leftrightarrow \frac{k-1}{n-1}<\frac{T-t}{T}\le \frac{k}{n-1}$$
So we have
\begin{equation}\label{eq-28}
\lim_{n\rightarrow \infty}{\frac{k}{n}}=\frac{T-t}{T}
\end{equation}

Using \eqref{eq-28} we have

\begin{eqnarray*}
\ln\left({J(t)}^{\frac{t}{T}}\right) &=& \frac{t}{T}\cdot \frac{1}{t}\int_0^t{\ln X_{\tau}d\tau }=\frac{1}{T}\lim_{n\rightarrow \infty}{\sum_{i=1}^{n-k}{\left(\frac{T}{n-1}\ln X_i\right)}}=
\\
&=& \lim_{n\rightarrow \infty}{\frac{1}{n}\sum_{i=1}^{n-k}\ln X_i}=\lim_{n \rightarrow \infty}{\ln \sqrt[n]{X_1\cdots X_{n-k}}}=
\\
&=& \ln \lim_{n \rightarrow \infty}\sqrt[n]{X_1\cdots X_{n-k}}
\end{eqnarray*}
So we have
\begin{equation}\label{eq-29}
{J(t)}^{\frac{t}{T}}=\lim_{n \rightarrow \infty}\sqrt[n]{X_1\cdots X_{n-k}}
\end{equation}

\begin{eqnarray*}
(T_{n-k+1}-t)+\frac{1}{k^2}\sum_{i=1}^{k-1}{i^2(T_{n-i+1}-T_{n-i})} &=& (T_{n-k+1}-t)+\frac{T}{n-1}\frac{1}{k^2}\sum_{i=1}^{k-1}{i^2}
\\
&=& (T_{n-k+1}-t)+\frac{T}{n-1}\frac{k(k-1)(2k-1)}{6k^2}=
\\
&=& (T_{n-k+1}-t)+\frac{k-1}{n-1}\left(\frac{1}{3}-\frac{1}{6k}\right)T
\end{eqnarray*}
Using \eqref{eq-28} and $\lim_{n \rightarrow \infty}{(T_{n-k+1}-t)}=0$, we have
\begin{equation}\label{eq-30}
\lim_{n \rightarrow \infty}{(T_{n-k+1}-t)+\frac{1}{k^2}\sum_{i=1}^{k-1}{i^2(T_{n-i+1}-T_{n-i})}}=\frac{1}{3}(T-t)
\end{equation}

Using $\frac{1}{k}\ln \frac{X_1\cdots X_{n-k}X^k}{K^n}=\frac{n}{k}\ln \frac{\sqrt[n]{X_1\cdots X_{n-k}}X^{\frac{k}{n}}}{K}$ and \eqref{eq-28} and \eqref{eq-29},we have

\begin{equation}\label{eq-31}
\lim_{n \rightarrow \infty}{\frac{n}{k}\ln \frac{\sqrt[n]{X_1\cdots X_{n-k}}X^{k/n}}{K}}=\frac{T}{T-t}\ln{\frac{{J(t)}^{\frac{t}{T}}X^{\frac{T-t}{T}}}{K}}
\end{equation}

Using $\frac{1}{k}\sum_{i=1}^{k-1}{i(T_{n-i+1}-T_{n-i})}+(T_{n-k+1}-t)= \frac{T}{n-1}\frac{k-1}{2}+(T_{n-k+1}-t)$ and \eqref{eq-28} we have

\begin{equation}\label{eq-32}
\lim_{n \rightarrow \infty}{\frac{1}{k}\sum_{i=1}^{k-1}{i(T_{n-i+1}-T_{n-i})}+(T_{n-k+1}-t)}=\frac{T}{2}\frac{T-t}{T}=\frac{T-t}{2}
\end{equation}
Now rewrite as
\begin{eqnarray*}
d_1^{n-k}(X,t) &=& \bigg\{ \ln\frac{X_1\cdots X_{n-k}X^k}{K^n}+(r-q-\frac{\sigma ^2}{2})\bigg[k(T_{n-k+1}-t)+\sum_{i=1}^{k-1}{i(T_{n-i+1}-T_{n-i})}\bigg]+
\\
&+& \frac{\sigma ^2}{n}\left[k^2(T_{n-k+1}-t)+\sum_{i=1}^{k-1}{i^2(T_{n-i+1-T_{n-i}})}\right]\bigg\}\cdot
\\
&\cdot &\left(\sigma \sqrt{k^2(T_n-k+1)-t}+\sum_{i=1}^{k-1}{i^2(T_{n-i+1}-T_{n-i})}\right)^{-1}
\\
&=& \bigg\{\frac{n}{k} \ln\frac{\sqrt[n]{X_1\cdots X_{n-k}}X^{k/n}}{K^n}+(r-q-\frac{\sigma ^2}{2})\bigg[(T_{n-k+1}-t)+\frac{1}{k}\sum_{i=1}^{k-1}{i(T_{n-i+1}-T_{n-i})}\bigg]+
\\
&+& \frac{k\sigma ^2}{n}\left[(T_{n-k+1}-t)+\frac{1}{k^2}\sum_{i=1}^{k-1}{i^2(T_{n-i+1-T_{n-i}})}\right]\bigg\}\cdot
\\
&\cdot &\left(\sigma \sqrt{(T_n-k+1)-t}+\frac{1}{k^2}\sum_{i=1}^{k-1}{i^2(T_{n-i+1}-T_{n-i})}\right)^{-1}
\end{eqnarray*}
From \eqref{eq-30},\eqref{eq-31},\eqref{eq-32} we have
\begin{eqnarray}\label{eq-33}
\lim_{n \rightarrow \infty}{d_1^{n-k}(X,t)}&=& \frac{\frac{T}{T-t}\ln \frac{{J(t)}^{\frac{t}{T}}X^{\frac{T-t}{T}}}{K}+(r-q-\frac{\sigma ^2}{2})\frac{T-t}{2}+\frac{T-t}{T}\frac{T-t}{T}\frac{T-t}{3}\sigma ^2}{\sigma \sqrt{\frac{T-t}{3}}}=\nonumber
\\
&=& \frac{\frac{1}{T}\ln \frac{{J(t)}^{t}}X^{T-t}{K^T}+\left((r-q-\frac{\sigma ^2}{2})\frac{T-t}{2T}+\frac{1}{3}\left(\frac{T-t}{T}\right)^2\sigma ^2\right)(T-t)}{\sigma \frac{T-t}{T}\frac{\sqrt{T-t}}{\sqrt{3}}}=\nonumber
\\
&=& \frac{\frac{1}{T}\ln \frac{{J(t)}^{t}}X^{T-t}{K^T}+[r^*+(\sigma ^*)^2](T-t)}{\sigma ^*\sqrt{T-t}}=d_1^*
\end{eqnarray}
Similarly rewrite as
\begin{eqnarray*}
d_2^{n-k}(X,t) &=& \bigg\{\frac{n}{k} \ln\frac{\sqrt[n]{X_1\cdots X_{n-k}}X^{k/n}}{K^n}+(r-q-\frac{\sigma ^2}{2})\bigg[(T_{n-k+1}-t)+\frac{1}{k}\sum_{i=1}^{k-1}{i(T_{n-i+1}-T_{n-i})}\bigg]\bigg\}\cdot
\\
&\cdot &\left(\sigma \sqrt{(T_n-k+1)-t}+\frac{1}{k^2}\sum_{i=1}^{k-1}{i^2(T_{n-i+1}-T_{n-i})}\right)^{-1}
\end{eqnarray*}

From \eqref{eq-30},\eqref{eq-31},\eqref{eq-32} we have
\begin{eqnarray}\label{eq-34}
\lim_{n \rightarrow \infty}{d_2^{n-k}(X,t)}&=& \frac{\frac{T}{T-t}\ln \frac{{J(t)}^{\frac{t}{T}}X^{\frac{T-t}{T}}}{K}+(r-q-\frac{\sigma ^2}{2})\frac{T-t}{2}}{\sigma \sqrt{\frac{T-t}{3}}}=\nonumber
\\
&=& \frac{\frac{1}{T}\ln \frac{{J(t)}^{t}}X^{T-t}{K^T}+\left((r-q-\frac{\sigma ^2}{2})\frac{T-t}{2T}\right)(T-t)}{\sigma \frac{T-t}{T}\frac{\sqrt{T-t}}{\sqrt{3}}}=\nonumber
\\
&=& \frac{\frac{1}{T}\ln \frac{{J(t)}^{t}}X^{T-t}{K^T}+r^*(T-t)}{\sigma ^*\sqrt{T-t}}=d_2^*
\end{eqnarray}
where
$$r^*=(r-q-\frac{\sigma ^2}{2})\frac{T-t}{2T},\sigma ^*=\frac{T-t}{\sqrt{3}T}$$
Consider
\begin{eqnarray*}
\theta _k(t)&=& \mu (\frac{1}{n})(T_n-T_{n-1})+\mu (\frac{2}{n})(T_{n-1}-T_{n-2})+\cdots +\mu (\frac{k}{n})(T_{n-k+1}-t)=
\\
&=& \left(\frac{T}{n-1}\sum_{i=1}{k-1}(\frac{i}{n}-1)+(\frac{k}{n}-1)(T_{n-k+1}-t)\right)r-\left(\frac{T}{n-1}\sum_{i=1}^{k-1}+\frac{k}{n}(T_{n-k+1}-t)\right)q-
\\
&-& \frac{\sigma ^2}{2}\left(\frac{T}{n-1}\sum_{i=1}{k-1}(\frac{i^2}{n^2}-\frac{i}{n})+(\frac{k^2}{n^2}-\frac{k}{n})(T_{n-k+1}-t)\right)=
\\
&=& \left(\frac{T}{n-1}\left(\frac{k(k-1)}{2n}-k\right)(\frac{k}{n}-1)(T_{n-k+1}-t)\right)r-\left(\frac{T}{n-1}\frac{k(k+1)}{2n}+\frac{k}{n}(T_{n-k+1}-t)\right)-
\\
&-& \frac{\sigma ^2}{2}\left(\frac{T}{n-1}\left(\frac{k(k-1)(2k-1)}{6n^2}+\frac{k(k-1)}{2n}\right)+\left(\frac{k^2}{n^2}-\frac{k}{n}\right)(T_{n-k+1}-t)\right)
\end{eqnarray*}
From \eqref{eq-30},\eqref{eq-31},\eqref{eq-32} we have

\begin{eqnarray}\label{eq-35}
\lim_{n\rightarrow \infty}{\theta _k(t)}&=& \left(\frac{(T-t)^2}{2T}-(T-t)\right)r-\frac{(T-t)^2}{2T}q-\left(\frac{(T-t)^3}{3T^2}+\frac{(T-t)^2}{2T}\right)\frac{\sigma ^2}{2}=\nonumber
\\
&=& (r-q-\frac{\sigma ^2}{2})\frac{(T-t)^2}{2T}+\frac{(T-t)^3}{3T^2}\frac{\sigma ^2}{2}-r(T-t)=\nonumber
\\
&=& (r^*+\frac{(\sigma ^*)^2}{2}(T-t)-r(T-t))
\end{eqnarray}
From \eqref{eq-33},\eqref{eq-34},\eqref{eq-35} we have the required result (QED).

\end{proof}

We can obtain the result of the geometric Asian options with floating price in the same way as in Theorem \ref{the-5} and Theorem \ref{the-6}.

Let $0=T_1<\cdots <T_n=T$  be the fixed monitring times and $X_1,\cdots ,X_n$  the asset prices at the monitoring times, respectively. Denote the price of discrete geometric Asian option with floating exercise price by $V_n(X,t)$ . From the definition, the expriy payoff is
$$V_n(X_n,T_n)=\left(X-\sqrt[n]{X_1\cdots X_n}\right)^+$$
Denote the option's price on the interval $T_i<t\le X_n$  by $V_n^i(X,t),i=1,\cdots n-1$ . Then on this interval  $T_{n-1}<t<\le {T_n}$, the assets prices  $X_1,\cdots ,X_{n-1}$ are  known quantities, we can rewrite as.
\begin{equation}\label{eq-36}
V_n^{n-1}(X,T_n)=\left(X-\sqrt[n]{X_1\cdots X_{n-1}}X^{1/n}\right)^+
\end{equation}
Thus in this interval the discrete geometric Asian options with floating exercise price is a vanilla option with the expiry payoff (36) and the pricing model is given by
$$LV_n^{n-1}=0~~~~~~~~~~~~(X>0,T_{n-1}<t<T_n)$$
\begin{equation}\label{eq-37}
V_n^{n-1}(X,T_n)=\left(X-\sqrt[n]{X_1\cdots X_{n-1}}X^{1/n}\right)^+
\end{equation}

 $V_n^{n-1}(X,t)$ depends on $X,t $ and  $X_1,\cdots ,X_{n-1}$ and so $V_n^{n-1}(X,t)$  can be rewritten by $V_n^{n-1}(X,t;X_1,\cdots,X_{n-1})$ . Especially $X=X_{n-1}$  at time $T_{n-1}$ , the price $V_n^{n-1}(X,T_{n-1}$ can be rewritten as$V_n^{n-1}(X,t;X_1,\cdots,X_{n-2},X)$. On the interval $T_{n-2}<t\le T_{n-1}$  the asset prices $X_1,\cdots ,X_{n-2}$  at the monitoring times $T_1 ,\cdots ,T_{n-2}$ are known quantities and thus in this interval the option is a vanilla option with the expiry payoff  $V_n^{n-1}(X,t;X_1,\cdots,X_{n-2},X)$. Again $V_n^{n-2}(X,t)$  depends on  $X,t$ as well as $X_1,\cdots ,X_{n-2}$ . Repeating these process, we get

$$LV_n^{i}=0~~~~~~~~~~~~(X>0,T_{i}<t<T_{i+1})$$
\begin{equation}\label{eq-38}
V_n^{i}(X,T_n)=V_n^{i+1}(X,T_i;X_1\cdots X_i,X),i=\bar{1,n-2}.
\end{equation}

The pricing model of discrete geometric Asian options with floating exercise price and n monitoring times is \eqref{eq-37} and \eqref{eq-38}.

\begin{theorem}\label{theorem-8}
The price of discrete geometric Asian options with floating price and n monitoring times (the solution to\eqref{eq-37},\eqref{eq-38}) is given by
\begin{equation}\label{eq-39}
V_n^{n-k}(X,t;T_{n-k+1},\cdots ,T_{n-1},T_n)=Xe^{-q(T_n-t)}N(d_2^{n-k})-\sqrt[n]{X_1\cdots X_{n-k}}X^{k/n}e^{\theta _k(t)}N(d_1^{n-k})
\end{equation}
$$T_{n-k}\le t<T_{n-k+1},(k=\bar{1,n-1})$$
Here
\begin{eqnarray*}
d_1^{n-k}&=& \bigg\{\ln \frac{X^{\frac{n-k}{n-1}}}{\sqrt[n-1]{X_1\cdots X_{n-k}}}+(r-q-\frac{\sigma ^2}{2})\left[\frac{n-k}{n-1}(T_{n-k+1}-t)+\sum_{j=1}^{k-1}{\frac{n-j}{n-1}(T_n-j+1-T_{n-j})}\right]+
\\
&+& \sigma ^2\left[\frac{k(n-k)}{n(n-1)}(T_{n-k+1}-t)+\sum_{j=1}^{k-1}{\frac{j(n-j)}{n(n-1)}(T_{n-j+1}-T_{n-j})}\right]\bigg\}\cdot
\\
&\cdot & \left(\sigma \sqrt{\left(\frac{n-k}{n-1}\right)^2(T_{n-k+1}-t)+\sum_{j=1}^{k-1}{\left(\frac{n-j}{n-1}\right)^2(T_{n-j+1}-T_{n-j})}}\right)^{-1}
\end{eqnarray*}

\begin{eqnarray*}
d_2^{n-k}&=& \bigg\{\ln \frac{X^{\frac{n-k}{n-1}}}{\sqrt[n-1]{X_1\cdots X_{n-k}}}+(r-q-\frac{\sigma ^2}{2})\left[\frac{n-k}{n-1}(T_{n-k+1}-t)+\sum_{j=1}^{k-1}{\frac{n-j}{n-1}(T_n-j+1-T_{n-j})}\right]\cdot
\\
&\cdot & \left(\sigma \sqrt{\left(\frac{n-k}{n-1}\right)^2(T_{n-k+1}-t)+\sum_{j=1}^{k-1}{\left(\frac{n-j}{n-1}\right)^2(T_{n-j+1}-T_{n-j})}}\right)^{-1}
\end{eqnarray*}

$$\theta _k(t)=\mu (\frac{k}{n})(T_{n-k+1}-t)+\sum_{j=1}^{k-1}{\mu (\frac{j}{n})(T_{n-j+1}-T_{n-j})}$$
$$\mu (\beta)=(\beta -1)r-\beta q+\frac{\sigma ^2}{2}(\beta ^2-\beta)$$
\end{theorem}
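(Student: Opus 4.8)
The plan is to repeat, with the fixed strike $K$ replaced by the running geometric average and one extra power of $X$ carried through each step, the two-step induction used to prove Theorem \ref{the-5}. On the last interval $T_{n-1}<t<T_n$ the prices $X_1,\dots,X_{n-1}$ are already known, and since
$$X>\sqrt[n]{X_1\cdots X_{n-1}}\,X^{1/n}\ \Longleftrightarrow\ X^{(n-1)/n}>(X_1\cdots X_{n-1})^{1/n}\ \Longleftrightarrow\ X>K',\qquad K':=(X_1\cdots X_{n-1})^{1/(n-1)},$$
the payoff \eqref{eq-37} rewrites as $X\cdot 1(X>K')-\sqrt[n]{X_1\cdots X_{n-1}}\,X^{1/n}\cdot 1(X>K')$, the difference of the $1$-power and the $\tfrac1n$-power upper binary options with barrier $K'$. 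Theorem \ref{theorem 2}, together with $\mu(1)=-q$ read off from \eqref{eq-5}, then yields \eqref{eq-39} for $k=1$, the empty sums collapsing $\theta_1$, $d_1^{n-1}$, $d_2^{n-1}$ to the stated expressions.

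For the inductive step, assume \eqref{eq-39} holds on $T_{n-k}<t<T_{n-k+1}$. Putting $X=X_{n-k}$ and renaming the new spatial variable $X$, the terminal datum $V_n^{n-k}(X,T_{n-k})$ in \eqref{eq-38} becomes a difference of two payoffs of the "power normal distribution standard option" form \eqref{eq-11}: the summand coming from $Xe^{-q(T_n-t)}N(d_2^{n-k})$ keeps the power $\beta=1$, while $\sqrt[n]{X_1\cdots X_{n-k}}X^{k/n}e^{\theta_k}N(d_1^{n-k})$ turns into $\sqrt[n]{X_1\cdots X_{n-k-1}}X^{(k+1)/n}e^{\theta_k(T_{n-k})}N(\cdot)$ and carries the power $\beta=\tfrac{k+1}{n}$. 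Rewriting the two $N$-arguments in the $\delta$-form of \eqref{eq-11} exactly as in the proof of Theorem \ref{the-5} — with $i=k+1$, $\tau_1=\sum_{j=1}^{k}j(T_{n-j+1}-T_{n-j})$, $\tau_1'=\sum_{j=1}^{k}j^2(T_{n-j+1}-T_{n-j})$, and $\alpha=0$ for the first piece, $\alpha=\bigl(\tfrac1n\tau_1'\bigr)/\tau_1$ for the second — I would apply Theorem \ref{theorem 3} to each piece with $\tau_2=T_{n-k}-t$. This gives
$$V_n^{n-k-1}(X,t)=Xe^{-q(T_n-t)}N(d_2^{n-k-1})-\sqrt[n]{X_1\cdots X_{n-k-1}}\,X^{(k+1)/n}e^{\theta_{k+1}(t)}N(d_1^{n-k-1}),$$
with $\theta_{k+1}(t)=\mu\bigl(\tfrac{k+1}{n}\bigr)(T_{n-k}-t)+\sum_{j=1}^{k}\mu\bigl(\tfrac jn\bigr)(T_{n-j+1}-T_{n-j})$ and $d_1^{n-k-1},d_2^{n-k-1}$ read off from the composition rule \eqref{eq-13}; it then remains only to match these with the formulae in the statement, and the case $t=T_1=0$ is the instance $k=n-1$.

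The one genuinely laborious part is the bookkeeping: one has to carry the barrier $K^{n}/(X_1\cdots X_{n-k-1})$ and the two time-weighted partial sums through the quadratic change of variables \eqref{eq-15} and verify that $\delta\bigl(X^{i}/\widetilde K,\ i\tau_2+\tau_1,\ i^2\tau_2+\tau_1',\ \tfrac{i\beta\tau_2+\alpha\tau_1}{i\tau_2+\tau_1}\bigr)$ — after inserting $i=k+1$, $\beta=\tfrac{k+1}{n}$ and simplifying with $\sum_{j=1}^{k}j=\tfrac{k(k+1)}2$, $\sum_{j=1}^{k}j^{2}=\tfrac{k(k+1)(2k+1)}6$ — collapses to the stated $d_1^{n-k-1}$ and $d_2^{n-k-1}$, while the $r$-, $q$- and $\sigma^{2}$-coefficients accumulate into the stated $\theta_{k+1}$. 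No idea beyond Theorems \ref{theorem 2} and \ref{theorem 3} is required; the only care needed is to keep straight which power of $X$ and which partial sum of monitoring intervals is attached to each factor $N(\cdot)$ at every step.
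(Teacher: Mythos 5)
Your overall strategy --- rewrite the terminal payoff on $(T_{n-1},T_n]$ as the difference of an asset ($\alpha=1$) and a $\tfrac1n$-power upper binary at the barrier $K'=(X_1\cdots X_{n-1})^{1/(n-1)}$, then step backwards through the monitoring dates with Theorem \ref{theorem 3} --- is exactly what the paper intends (it omits the proof, saying only that it parallels Theorem \ref{the-5}), and your reduction of the indicator via Remark 1 is correct. But two concrete things go wrong. First, in the inductive step you transplant the fixed-strike parameters $i=k+1$, $\tau_1=\sum_{j=1}^{k}j(T_{n-j+1}-T_{n-j})$, $\tau_1'=\sum_{j=1}^{k}j^2(T_{n-j+1}-T_{n-j})$ verbatim. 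In the floating case the logarithm inside the $N$-arguments is $\ln\bigl(X^{(n-k)/(n-1)}/\sqrt[n-1]{X_1\cdots X_{n-k}}\bigr)$, which after setting $X_{n-k}=X$ becomes $\ln\bigl(X^{(n-k-1)/(n-1)}/\sqrt[n-1]{X_1\cdots X_{n-k-1}}\bigr)$; so the exponent fed into Theorem \ref{theorem 3} must be $i=\tfrac{n-k-1}{n-1}$ (a number less than $1$, not $k+1$), the barrier is $(X_1\cdots X_{n-k-1})^{1/(n-1)}$ rather than $K^n/(X_1\cdots X_{n-k-1})$, and the time sums must be the $\tfrac{n-j}{n-1}$- and $\bigl(\tfrac{n-j}{n-1}\bigr)^2$-weighted ones that appear in the stated $d_{1}^{n-k},d_2^{n-k}$, namely $\tau_1=\sum_{j=1}^{k}\tfrac{n-j}{n-1}(T_{n-j+1}-T_{n-j})$ and $\tau_1'=\sum_{j=1}^{k}\bigl(\tfrac{n-j}{n-1}\bigr)^2(T_{n-j+1}-T_{n-j})$. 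With the parameters you wrote, the composition rule \eqref{eq-13} reproduces the fixed-strike arguments of Theorem \ref{the-5}, not the ones claimed here.

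Second, the base case does not in fact ``collapse to the stated expressions.'' By Theorem \ref{theorem 2} the asset leg $X\cdot 1(X>K')$ prices to $Xe^{-q(T_n-t)}N\bigl(d_2^{n-1}+\sigma\sqrt{T_n-t}\bigr)$, because $\alpha=1$ contributes the drift shift $\alpha\sigma^2=\sigma^2$ in \eqref{eq-9}; the theorem as printed attaches $N(d_2^{n-1})$, with no $\sigma^2$ correction, to that leg. (The subsequent limit theorem, whose $d_2^*$ contains $r-q+\tfrac{\sigma^2}{2}$, shows that this correction must survive in the limit, so the printed $d_2^{n-k}$ is evidently missing a term $\sigma^2\bigl[\tfrac{n-k}{n-1}(T_{n-k+1}-t)+\sum_{j=1}^{k-1}\tfrac{n-j}{n-1}(T_{n-j+1}-T_{n-j})\bigr]$; the defect is in the statement rather than in your computation.) Consequently the asset leg is a $\beta=1$, $\alpha\neq 0$ instance of \eqref{eq-11} at every backward step, so your choice ``$\alpha=0$ for the first piece'' is also not the one the induction actually requires. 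You should either carry that correction explicitly through the induction or flag the discrepancy; asserting that Theorem \ref{theorem 2} yields \eqref{eq-39} for $k=1$ exactly as printed is not tenable.
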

The proof is omitted as it is similar with the proof of Theorem \ref{the-5}.

\begin{theorem}
In the case with floating exercise price, if we increase the number $n$ of monitoring times, the price of discrete geometric Asian options  converges to the price of $V(X,J,t)$ continuous geometric Asian options with floating exercise price. That is, we have
$$\lim_{n\rightarrow \infty}{V_n(X,t)}=V(X,J,t)$$
Here $V(X,J,t)$  is given by
$$V(X,J,t)=e^{-q(T-t)}XN(d_2^*)-J^{\frac{t}{T}}X^{\frac{T-t}{T}}e^{\theta ^*}N(d_1^*)$$
where

$$d_1^*=\frac{\sqrt{3}}{\sigma \sqrt{T^3-t^3}}\left[t\ln{\frac{X}{J}}+(r-q+\frac{\sigma ^2}{2})\frac{T^2-t^2}{2}-\sigma ^3\frac{T^3-t^3}{3T}\right]$$
$$d_2^*=\frac{\sqrt{3}}{\sigma \sqrt{T^3-t^3}}\left[t\ln{\frac{X}{J}}+(r-q+\frac{\sigma ^2}{2})\frac{T^2-t^2}{2}\right]$$
$$\theta ^*=-q(T-t)-(r-q+\frac{\sigma ^2}{2})\frac{T^2-t^2}{2T}+\frac{\sigma ^2(T^3-t^3)}{6T^2}$$
\end{theorem}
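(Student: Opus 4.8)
The plan is to repeat, almost verbatim, the argument used for Theorem~\ref{the-6}, this time starting from the closed form \eqref{eq-39} of Theorem~\ref{theorem-8} rather than from \eqref{eq-26}. Fix $t\in(0,T)$; for clarity assume the monitoring times form a uniform partition, so that $T_{n-j+1}-T_{n-j}=T/(n-1)$ for every $j$ (the general case, with the maximal length of the subintervals tending to $0$, goes through in the same way once the finite sums below are read as Riemann--Stieltjes sums, and the boundary case $t=0$, for which $k=n-1$, is entirely analogous). For each $n$ there is a unique $k=k(n)$ with $T_{n-k}\le t<T_{n-k+1}$, so that $V_n(X,t)=V_n^{n-k}(X,t)$, and, exactly as in \eqref{eq-28}, $k/n\to(T-t)/T$, hence also $(n-k)/n\to t/T$ and $(n-k)/(n-1)\to t/T$; since $0<t<T$ this forces both $k\to\infty$ and $n-k\to\infty$, which is what makes the asymptotic estimates below legitimate.

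Next I would assemble the elementary limits that feed $d_1^{n-k}$, $d_2^{n-k}$ and $\theta_k(t)$. As in the derivation of \eqref{eq-29}, the Riemann-sum convergence $\frac{1}{n-1}\sum_{i=1}^{n-k}\ln X_i\to\frac{1}{T}\int_0^t\ln X_\tau\,d\tau=\frac{t}{T}\ln J(t)$ yields $\sqrt[n-1]{X_1\cdots X_{n-k}}\to J(t)^{t/T}$, and the same argument with $\frac1n$ in place of $\frac{1}{n-1}$ yields $\sqrt[n]{X_1\cdots X_{n-k}}\to J(t)^{t/T}$; combined with $X^{k/n}\to X^{(T-t)/T}$ and $X^{(n-k)/(n-1)}\to X^{t/T}$ this settles the prefactor $\sqrt[n]{X_1\cdots X_{n-k}}\,X^{k/n}\to J(t)^{t/T}X^{(T-t)/T}$ and the logarithmic term $\ln\bigl(X^{(n-k)/(n-1)}/\sqrt[n-1]{X_1\cdots X_{n-k}}\bigr)\to\frac{t}{T}\ln\frac{X}{J}$. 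It then remains to evaluate the limits of the weighted time sums in Theorem~\ref{theorem-8}. Rewriting each sum over $j$ as a sum over $\ell=n-j$ and using $\sum_{\ell=1}^{m}\ell=\tfrac{m(m+1)}{2}$ and $\sum_{\ell=1}^{m}\ell^2=\tfrac{m(m+1)(2m+1)}{6}$ together with $T_{n-k+1}-t\to0$, one obtains
$$\lim_{n\to\infty}\left[\tfrac{n-k}{n-1}(T_{n-k+1}-t)+\sum_{j=1}^{k-1}\tfrac{n-j}{n-1}(T_{n-j+1}-T_{n-j})\right]=\frac{T^2-t^2}{2T},$$
$$\lim_{n\to\infty}\left[\bigl(\tfrac{n-k}{n-1}\bigr)^2(T_{n-k+1}-t)+\sum_{j=1}^{k-1}\bigl(\tfrac{n-j}{n-1}\bigr)^2(T_{n-j+1}-T_{n-j})\right]=\frac{T^3-t^3}{3T^2},$$
$$\lim_{n\to\infty}\left[\tfrac{k(n-k)}{n(n-1)}(T_{n-k+1}-t)+\sum_{j=1}^{k-1}\tfrac{j(n-j)}{n(n-1)}(T_{n-j+1}-T_{n-j})\right]=\frac{(T-t)^2(T+2t)}{6T^2},$$
together with the analogous limits $\tfrac{(T-t)^2}{2T}$ and $\tfrac{(T-t)^3}{3T^2}$ for the sums with weights $\tfrac jn$ and $\tfrac{j^2}{n^2}$ occurring in $\theta_k(t)=\sum_{j=1}^{k-1}\mu(\tfrac jn)(T_{n-j+1}-T_{n-j})+\mu(\tfrac kn)(T_{n-k+1}-t)$, and with the exact telescoping identity $\sum_{j=1}^{k-1}(T_{n-j+1}-T_{n-j})+(T_{n-k+1}-t)=T-t$.

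Finally I would substitute these limits into Theorem~\ref{theorem-8}. The common denominator of $d_1^{n-k}$ and $d_2^{n-k}$ tends to $\sigma\sqrt{(T^3-t^3)/(3T^2)}=\tfrac{\sigma}{\sqrt3\,T}\sqrt{T^3-t^3}$; feeding in the numerator limits gives $d_2^{n-k}\to d_2^*$ directly, while for $d_1^{n-k}$ the extra $\sigma^2$-weighted sum must first be reorganized through the algebraic identity $(T-t)(T+2t)-3T(T+t)=-2(T^2+Tt+t^2)$, which converts its contribution into the $-\,\sigma^2(T^3-t^3)/(3T)$ term of $d_1^*$, so $d_1^{n-k}\to d_1^*$; the same limits, applied to $\theta_k(t)$ with $\mu(\beta)=(\beta-1)r-\beta q+\tfrac{\sigma^2}{2}(\beta^2-\beta)$, give $\theta_k(t)\to\theta^*$. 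Since $N(\cdot)$ and $\exp(\cdot)$ are continuous, $e^{-q(T_n-t)}=e^{-q(T-t)}$ is independent of $n$, and $X^{k/n}\to X^{(T-t)/T}$, it follows that
$$\lim_{n\to\infty}V_n^{n-k}(X,t)=Xe^{-q(T-t)}N(d_2^*)-J^{t/T}X^{(T-t)/T}e^{\theta^*}N(d_1^*)=V(X,J,t),$$
which is the asserted convergence. The only real friction in this programme is the bookkeeping of the weighted time sums --- above all the cross term with weight $j(n-j)/(n(n-1))$, which has no counterpart in the fixed-strike proof of Theorem~\ref{the-6} --- and making the Riemann-sum passages rigorous for an arbitrary partition; everything else is the continuity argument already carried out there.
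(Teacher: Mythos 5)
Your proposal is correct and is precisely the argument the paper intends: the paper omits this proof as ``similar with the proof of Theorem~\ref{the-6}'', and you carry out exactly that analogue starting from \eqref{eq-39}, with the additional weighted time sums specific to the floating-strike case (in particular the cross term with weight $j(n-j)/(n(n-1))$, whose limit $(T-t)^2(T+2t)/(6T^2)$ you compute correctly, and whose reorganization via $3T(T+t)-2(T^2+Tt+t^2)=(T-t)(T+2t)$ does reproduce the stated $d_1^*$). The only caveat is that identifying your limits with the displayed $d_2^*$ and $\theta^*$ requires reading past several typos in the theorem's statement (e.g.\ the sign of $\tfrac{\sigma^2}{2}$ in $d_2^*$, the exponent in ``$\sigma^3$'', and the $-q(T-t)$ versus $-r(T-t)$ and $(T-t)^2$ versus $T^2-t^2$ in $\theta^*$), since the limits of the quantities actually defined in Theorem~\ref{theorem-8} are $t\ln\frac{X}{J}+(r-q-\tfrac{\sigma^2}{2})\tfrac{T^2-t^2}{2}$ for the $d_2^*$ numerator and $-r(T-t)+(r-q-\tfrac{\sigma^2}{2})\tfrac{(T-t)^2}{2T}+\tfrac{\sigma^2(T-t)^3}{6T^2}$ for $\theta^*$; this is a defect of the paper's statement, not of your argument.
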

The proof is omitted as it is similar with the proof of Theorem \ref{the-6}.

\textbf{Remark}: Remark: This formula is slightly different from that of \cite{J}. But it is equal to the formula of \cite{P2} in the case that q=0.


\section{High order power binary option}

\begin{definition}
~\textbf{A second order  $\alpha$-power binary option} is defined as the binary contract with expiry date $T_0$  underlying on an $\alpha$-power binary option with expiry date $T_1$. In other words, a second order $\alpha$-power binary option's price is the solution of \eqref{eq-1} with time $T_0$ payoff of the following form 
\end{definition}
$$V(x,T_0)=(M^\alpha)_{\xi_1}^{s_1}(x,T_0;T_1)1({s_0}x>{s_0}{\xi _0})$$

Here $(M^\alpha)_{\xi_1}^{s_1}(x,T_0;T_1)$ is the price at time $T_0$ of the $\alpha$-power binary option with expiry date $T_1$.
The price of the second  $\alpha$-power binary option is denoted by $(M^\alpha)^{s_0 s_1}_{\xi _0 \xi _1}(x,t;T_0,T_1)$  and this is called the second order $\alpha$-power binary option with expiry dates  $T_0,T_1$.
Correspondingly, $\alpha$- power binary option is called the first order  $\alpha$-power binary option.

Since
$$(M^\alpha)^{s_0s_1}_{\xi _0\xi _1}(x,T_0;T_0,T_1)=(M^\alpha)^{s_1}_{\xi _1}(x,T_0;T_1)1(s_0x>s_0\xi _0)$$
we have
$$(M^\alpha)^{+s_1}_{\xi _0\xi _1}(x,T_0;T_0,T_1)+(M^\alpha)^{-s_1}_{\xi _0\xi _1}(x,T_0;T_0,T_1)=(M^\alpha)^{s_1}_{\xi _1}(x,T_0;T_1)$$

Thus we have a parity relation between the price of the first order  $\alpha$-power binary option and the prices of the corresponding second order $\alpha$-power binary options.
$$(M^\alpha)^{+s_1}_{\xi _0\xi _1}(x,t)+(M^\alpha)^{-s_1}_{\xi _0\xi _1}(x,t)=(M^\alpha)^{s_1}_{\xi _1}(x,t),t<T_0$$

\begin{theorem}\label{theorem 5}
The price of the second order power binary option is given by
\begin{equation}\label{eq-40}
(M^\alpha)^{s_0s_1}_{\xi _0\xi _1}(x,t;T_0,T_1)=e^{\mu (T_1-t)}x^{\alpha}N_2(s_0d_0,s_1d_1;s_0s_1\rho)
\end{equation}
Here
$$d_i=\left( \ln\frac{x}{\xi _i}+\left(r-q-\frac{\sigma ^2}{2}+\alpha \sigma ^2\right)(T_i-t)\right)(\sigma \sqrt{T_i-t})^{-1},\rho =\sqrt{\frac{T_0-t}{T_1-t}}$$
$$N_2(d_0,d_1;\rho)=\frac{1}{2\pi \sqrt{1-\rho ^2}}\int_{-\infty}^{d_0}\int_{-\infty}^{d_1}e^{-\frac{y_0^2-2\rho y_0y_1+y_1^2}{2(1-\rho ^2)}}dy_0dy_1$$
\end{theorem}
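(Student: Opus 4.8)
The plan is to mimic the PDE/integral method already used for Theorems \ref{theorem 2} and \ref{theorem 3}: evaluate the inner (first order) option at its own valuation instant $T_0$ by Theorem \ref{theorem 2}, propagate the resulting time-$T_0$ payoff back to time $t$ with the Black--Scholes solution representation (proposition 1 of \cite{OK}), and then collapse the resulting double Gaussian integral into the bivariate normal integral $N_2$ by a linear change of variables.

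Concretely, I would first note that, by Theorem \ref{theorem 2} applied on $[T_0,T_1]$, the time-$T_0$ payoff of the second order option is
\[
V(x,T_0)=(M^\alpha)^{s_1}_{\xi_1}(x,T_0;T_1)\,1(s_0x>s_0\xi_0)=e^{\mu(T_1-T_0)}x^\alpha N\!\bigl(s_1\hat d_1(x)\bigr)\,1(s_0x>s_0\xi_0),
\]
where $\hat d_1(x)=d\bigl(x/\xi_1,r,q,\sigma,\alpha,T_1-T_0\bigr)$ is the quantity of \eqref{eq-9} with time-to-maturity $T_1-T_0$. Inserting this into the representation of proposition 1 of \cite{OK} on $[t,T_0]$, and writing $N(s_1\hat d_1(z))=\frac{1}{\sqrt{2\pi}}\int_{-\infty}^{s_1\hat d_1(z)}e^{-u^2/2}\,du$, expresses $(M^\alpha)^{s_0s_1}_{\xi_0\xi_1}(x,t;T_0,T_1)$ as a double integral over $(z,u)\in(0,\infty)\times(-\infty,\infty)$.

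Next I would carry out the substitution $w=\bigl[\ln(x/z)+(r-q-\tfrac{\sigma^2}{2}+\alpha\sigma^2)(T_0-t)\bigr](\sigma\sqrt{T_0-t})^{-1}$, exactly as in the proofs of Theorems \ref{theorem1} and \ref{theorem 2}. This absorbs the kernel, the factor $z^\alpha$ and the discount $e^{-r(T_0-t)}$ into $e^{\mu(T_0-t)}x^\alpha e^{-w^2/2}$, which together with the prefactor $e^{\mu(T_1-T_0)}$ gives the overall factor $e^{\mu(T_1-t)}x^\alpha$. Under the same substitution the indicator becomes $1(s_0w<s_0d_0)$ with $d_0$ as in the statement; and, setting $\rho=\sqrt{(T_0-t)/(T_1-t)}$ so that $T_1-T_0=(T_1-t)(1-\rho^2)$ and $\sqrt{T_0-t}=\rho\sqrt{T_1-t}$, a direct computation reduces the inner limit to $\hat d_1(z(w))=(d_1-\rho w)/\sqrt{1-\rho^2}$. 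Hence, for $s_0=s_1=+$,
\[
(M^\alpha)^{++}_{\xi_0\xi_1}(x,t;T_0,T_1)=e^{\mu(T_1-t)}x^\alpha\,\frac{1}{2\pi}\int_{-\infty}^{d_0}\int_{-\infty}^{(d_1-\rho w)/\sqrt{1-\rho^2}}e^{-(w^2+u^2)/2}\,du\,dw .
\]

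Finally I would apply the rotation $y_0=w$, $y_1=\rho w+u\sqrt{1-\rho^2}$, whose Jacobian is $\sqrt{1-\rho^2}$: it turns $w^2+u^2$ into $(y_0^2-2\rho y_0y_1+y_1^2)/(1-\rho^2)$ and the region into $\{y_0<d_0,\ y_1<d_1\}$, producing exactly $e^{\mu(T_1-t)}x^\alpha N_2(d_0,d_1;\rho)$. The remaining three sign combinations follow from the same computation together with the substitutions $y_i'=s_iy_i$ on the two integration variables, as at the end of the proof of Theorem \ref{theorem 2} (or, alternatively, via the parity relation preceding the statement): each such flip replaces $d_i$ by $s_id_i$ and changes the sign of the single cross term, so the correlation becomes $s_0s_1\rho$, which is \eqref{eq-40}. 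I expect the only delicate point to be precisely this bookkeeping — checking that the cross term of the quadratic form yields exactly the correlation $\rho$ and that all four sign cases remain consistent; everything else is the Gaussian-integral manipulation already performed for Theorems \ref{theorem 2} and \ref{theorem 3}.
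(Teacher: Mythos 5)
Your proposal is correct and follows essentially the same route as the paper: both nest the Black--Scholes integral representation (Proposition 1 of \cite{OK}) over $[t,T_0]$ around the known first-order price on $[T_0,T_1]$, perform a Gaussian change of variables referenced to the common time $t$, identify the correlation $\rho=\sqrt{(T_0-t)/(T_1-t)}$ by combining the two quadratic forms, and dispatch the sign cases with the substitution $y_i'=s_iy_i$. The only cosmetic difference is that you parametrize the inner integral by the standardized variable $u$ of $N(\cdot)$ and then rotate, whereas the paper keeps the inner asset variable $z_1$ and changes both variables at once; the resulting double integral is the same.
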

\begin{proof}
 By the definition and Preposition 1 of \cite{OK} we have

\begin{multline}
(M^\alpha)^{s_0 s_1}_{\xi _0 \xi _1}(x,t)=\nonumber
\\
= \frac{e^{-r(T_0-t)}}{\sigma \sqrt{2\pi (T_0-t)}}\int_{-\infty}^{\infty}\frac{1}{z_0}e^{-\frac{1}{2\sigma ^2(T_0-t)}\left(\ln\frac{x}{z_0}+\left(r-q-\frac{\sigma ^2}{2}\right)(T_0-t)\right)^2}1(s_0z_0>s_0\xi _0)(M^\alpha )^{s_1}_{\xi_1}(z_0,T_0;T_1)dz_0=
\\
= e^{-r(T_0-t)}\int_{-\infty}^{+\infty}\frac{e^{-\frac{1}{2\sigma ^2(T_0-t)}\left(\ln\frac{x}{z_0}+\left(r-q-\frac{\sigma ^2}{2}\right)(T_0-t)\right)^2}}{z_0 \sigma \sqrt{2\pi (T_0-t)}}1(s_0z_0>s_0\xi _0)
\\
~~~~~~\left( e^{\mu(T_1-T_0)}z_0^\alpha\int_{-\infty}^{+\infty}\frac{e^{-\frac{1}{2\sigma ^2(T_1-T_0)}\left(\ln\frac{x}{z_1}+\left(r-q-\frac{\sigma ^2}{2}+\alpha \sigma^2\right)(T_1-T_0)\right)^2}}{z_1 \sigma \sqrt{2\pi (T_1-T_0)}}1(s_1z_1>s_1\xi _1)dz_1\right)dz_0
\\
=x^\alpha e^{\mu (T_1-t)}\int_{-\infty}^{+\infty}\int_{-\infty}^{+\infty}\left(\frac{e^{-\frac{1}{2\sigma ^2(T_0-t)}\left(\ln\frac{x}{z_0}+\left(r-q-\frac{\sigma ^2}{2}+\alpha \sigma^2\right)(T_0-t)\right)^2}e^{-\frac{1}{2\sigma ^2(T_1-T_0)}\left(\ln\frac{x}{z_0}+\left(r-q-\frac{\sigma ^2}{2}+\alpha \sigma^2\right)(T_1-T_0)\right)^2}}{2\pi \sigma ^2z_0z_1\sqrt{(T_0-t)(T_1-T_0)}}\right)
\\
~~~~~~~~~~~~~~1(s_0z_0>s_0\xi _0)1(s_1z_1>s_1\xi _1)
\end{multline}
In this integral, we use the change of variables
$$y_i=\left(\ln\frac{x}{z_i}+\left(r-q-\frac{\sigma ^2}{2}+\alpha \sigma ^2\right)(T_i-t)\right)(\sigma \sqrt{T_i-t})^{-1},i=0,1$$

Then  $dy_i=(\sigma \sqrt{T_i-t})^{-1}\frac{-dz_i}{z_i}$. Considering
$$-\frac{y_0^2}{2}-\frac{(y_0\sigma \sqrt{T_0-t}-y_1\sigma \sqrt{T_1-t})^2}{2\sigma ^2(T_1-T_0)}=\frac{y_0^2-2\rho y_0y_1+y_1^2}{2(1-\rho ^2)},\rho=\sqrt{\frac{T_0-t}{T_1-t}}$$
we have
\begin{multline}
(M^\alpha)^{s_0 s_1}_{\xi _0 \xi _1}(x,t)=\nonumber
\\
=x^\alpha e^{\mu (T_1-t)}\frac{1}{2\pi \sqrt{1-\rho ^2}}\int_{-\infty}^{+\infty}\int_{-\infty}^{+\infty}e^{\frac{y_0^2-2\rho y_0y_1+y_1^2}{2(1-\rho ^2)}}1(s_0y_0<s_0d_0)1(s_1y_1<s_1d_1)dy_0dy_1=I
\end{multline}
We use the following change of variables again.
$$y'_0=s_0y_0,y'_1=s_1y_1$$
The Jacobians of this change is 
\begin{equation*} 
\left| \det
\begin{pmatrix}
s_0 & 0\\
0 & s_1
\end{pmatrix}
\right| =1
\end{equation*}
so we have
\begin{multline}
(M^\alpha)^{s_0 s_1}_{\xi _0 \xi _1}(x,t)=\nonumber
\\
=x^\alpha e^{\mu (T_1-t)}\frac{1}{2\pi \sqrt{1-\rho ^2}}\int_{-\infty}^{+\infty}\int_{-\infty}^{+\infty}e^{\frac{{y'}_0^2-2s_0s_1\rho y'_0y'_1+{y'}_1^2}{2(1-\rho ^2)}}1(s_0y'_0<s_0d_0)1(s_1y'_1<s_1d_1)dy'_0dy'_1
\\
=x^\alpha  e^{\mu (T_1-t)}\frac{1}{2\pi \sqrt{1-\rho ^2}}N_2(s_0d_0,s_1d_1;s_0s_1\rho)
\end{multline}

(QED).
\end{proof}
\begin{definition}
  An \textbf{n-th $\alpha$-power binary option} is inductively defined as the binary contract with the expiry date $T_0$  underlying on an (n-1)-th order $\alpha$-power binary option. That is, \textbf{an n-th order$\alpha$-power binary option}'s price is the solution of \eqref{eq-1} with the time  $T_0$ payoff of the following form 
\begin{equation}\label{eq-41}
V(x,T_0)=(M^\alpha)^{s_1\cdots s_{n-1}}_{\xi _1 \cdots \xi _{n-1}}(x,T_0;T_1,\cdots ,T_{n-1})1(s_0x>s_0\xi _0)
\end{equation}
Here $(M^\alpha)^{s_1\cdots s_{n-1}}_{\xi _1 \cdots \xi _{n-1}}(x,T_0;T_1,\cdots ,T_{n-1}$  is the price of (n-1)-th  order $\alpha$-power binary option with expiry dates $T_1,\cdots,T_{n-1}$.The price of this n-th order $\alpha $-power binary option is denoted by $(M^\alpha)^{s_0,s_1\cdots s_{n-1}}_{\xi _0,\xi _1 \cdots \xi _{n-1}}(x,t;T_0,T_1,\cdots ,T_{n-1})$  and this option is called an n-th order $\alpha$-power binary option with n expiry dates $T_0,T_1,\cdots ,T_{n-1}$.
\end{definition}
Between the prices of $n$-th order $\alpha$-power binary options and the corresponding price of $(n-1)$-th order  $\alpha$-power binary option, there is a following parity relation 
$$(M^\alpha)^{+,s_1\cdots s_{n-1}}_{\xi _0,\xi _1 \cdots \xi _{n-1}}(x,t)+(M^\alpha)^{-,s_1\cdots s_{n-1}}_{\xi _0,\xi _1 \cdots \xi _{n-1}}(x,t)=(M^\alpha)^{s_1\cdots s_{n-1}}_{\xi _0,\xi _1 \cdots \xi _{n-1}}(x,t),t<T_0$$

\begin{theorem}\label{theorem 6}
The price of n-th  $\alpha$-power binary option (the solution to the problem (\eqref{eq-1},\eqref{eq-41}) is given as following.
$$(M^\alpha)^{s_0,s_1\cdots s_{n-1}}_{\xi _0,\xi _1 \cdots \xi _{n-1}}(x,t;T_0,T_1,\cdots ,T_{n-1})=x^\alpha e^{\mu (T_{n-1}-t)}N_n(s_0d_0,\cdots,s_{n-1}d_{n-1};A_n(s_0\cdots s_{n-1}))$$
Here
$$d_i=\left(\ln\frac{x}{\xi _i}+\left(r-q-\frac{\sigma ^2}{2}+\alpha \sigma ^2\right)(T_i-t)\right)(\sigma \sqrt{T_i-t})^{-1}$$
$$N_n(d_0,\cdots,d_{n-1};A)=\left(\frac{1}{2\pi}\right)^{\frac{n}{2}}\frac{1}{|detA|^{\frac{1}{2}}}\int_{-\infty}^{d_0}\int_{-\infty}^{d_{n-1}}e^{-\textbf{y}^TA^{-1}\textbf{y}}dy_0\cdots dy_{n-1}$$
$$A(s_0s_1\cdots s_{n-1})=(s_is_ja_{ij})^{n-1}_{i,j=0},det[A(s_0s_1\cdots s_{n-1})]=det(A)$$
And $A=(a_{ij})_{i,j=0,\cdots ,n-1}$is n-th matrix which is defined as following like \cite{OK}.
$$a_{00}=(T_1-t)/(T_1-T_0),$$
$$a_{n-1,n-1}=(T_{n-1}-t)/(T_{n-1}-T_{n-2}),$$
$$a_{ii}=(T_i-t)/(T_i-T_{i-1})+(T_i-t)/(T_{i+1}-T_i)$$
$$a_{i,i+1}=a_{i+1,i}=-\sqrt{(T_i-t)(T_{i+1}-t)}/(T_{i+1}-T_i),0\le i \le n-2$$
And for other indices,$a_{ij}=0.$
\end{theorem}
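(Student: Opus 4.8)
The plan is to argue by induction on the order $n$, taking Theorem \ref{theorem 5} (the case $n=2$, together with Theorem \ref{theorem 2} for $n=1$ with $N_1=N$) as the base, and reducing the $n$-th order price to the $(n-1)$-th order price through the Black--Scholes Green's function representation of Proposition 1 of \cite{OK}. Since the time-$T_0$ payoff \eqref{eq-41} is the $(n-1)$-th order price at $T_0$ multiplied by $1(s_0x>s_0\xi_0)$, that representation gives
$$(M^\alpha)^{s_0,s_1\cdots s_{n-1}}_{\xi _0,\xi _1 \cdots \xi _{n-1}}(x,t)=\frac{e^{-r(T_0-t)}}{\sigma\sqrt{2\pi(T_0-t)}}\int_0^{\infty}\frac{1}{z_0}\,e^{-\frac{1}{2\sigma^2(T_0-t)}\left(\ln\frac{x}{z_0}+(r-q-\frac{\sigma^2}{2})(T_0-t)\right)^2}\,1(s_0z_0>s_0\xi_0)\,(M^\alpha)^{s_1\cdots s_{n-1}}_{\xi _1 \cdots \xi _{n-1}}(z_0,T_0;T_1,\dots,T_{n-1})\,dz_0.$$
Into this I would substitute the induction hypothesis for $(M^\alpha)^{s_1\cdots s_{n-1}}_{\xi _1 \cdots \xi _{n-1}}(z_0,T_0;\cdot)$, namely $z_0^{\alpha}e^{\mu(T_{n-1}-T_0)}N_{n-1}(s_1\tilde d_1,\dots,s_{n-1}\tilde d_{n-1};A_{n-1}(s_1\cdots s_{n-1}))$, where the $\tilde d_i$ use reference time $T_0$ and $A_{n-1}$ is the tridiagonal matrix built from $T_1<\cdots<T_{n-1}$ with reference time $T_0$. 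Expanding $N_{n-1}$ as its defining $(n-1)$-fold Gaussian integral over auxiliary variables $z_1,\dots,z_{n-1}$ turns the right side into a single $n$-fold integral. (Equivalently, one may iterate Proposition 1 of \cite{OK} $n-1$ times, exactly as it was iterated once in the proof of Theorem \ref{theorem 5}.)

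The treatment of the prefactors is routine and parallels the lower-order proofs: pulling $\exp(\alpha\ln(x/z_0))=x^{\alpha}z_0^{-\alpha}$ out of the Gaussian kernel cancels the $z_0^{\alpha}$ coming from the induction hypothesis, and the surviving exponentials combine, $e^{-r(T_0-t)}e^{\mu(T_{n-1}-T_0)}$ times the shift generated by the extra $\alpha\sigma^2$ in the completed square, into $e^{\mu(T_{n-1}-t)}$ — the same telescoping of $\mu$ already seen in Theorems \ref{theorem1} and \ref{theorem 5}. The next step is the change of variables
$$y_i=\left(\ln\frac{x}{z_i}+\left(r-q-\frac{\sigma^2}{2}+\alpha\sigma^2\right)(T_i-t)\right)(\sigma\sqrt{T_i-t})^{-1},\qquad i=0,\dots,n-1,$$
the $z_1,\dots,z_{n-1}$ being the former auxiliary variables read as the asset prices at $T_1,\dots,T_{n-1}$; under it each indicator $1(s_iz_i>s_i\xi_i)$ becomes $1(s_iy_i<s_id_i)$ with $d_i$ as in the statement, and the Jacobian supplies the remaining normalising constants.

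The crux is the exponent identity: after this substitution the quadratic term coming from the $[t,T_0]$ kernel plus the form $-\tilde{\mathbf y}^{T}A_{n-1}^{-1}\tilde{\mathbf y}$ of the induction hypothesis must collapse, in the single variable vector $\mathbf y=(y_0,\dots,y_{n-1})$, to $-\mathbf y^{T}A_n^{-1}\mathbf y$ with $A_n$ the $n\times n$ tridiagonal matrix of the statement. This is precisely the linear-algebra fact used in \cite{OK}: $A_n^{-1}$ is the covariance matrix of the unit-variance-normalised Brownian displacements observed at $T_0<\cdots<T_{n-1}$ from reference time $t$, so by the Markov property of Brownian motion the joint Gaussian density factors as (transition density over $[t,T_0]$) times (conditional density of the remaining coordinates), which is exactly the product appearing after the substitution; prepending one observation to a Markov sequence extends a tridiagonal precision matrix to a tridiagonal precision matrix, and one checks directly that the new border entries are the stated $a_{00}$ and $a_{0,1}=a_{1,0}$. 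The drift $\alpha\sigma^2$ introduced by the power payoff only shifts the means — hence enters solely through the $d_i$ — and leaves this covariance computation untouched. I expect this matrix identity to be the main obstacle; everything around it repeats the telescoping and Gaussian-integral manipulations of the proofs of Theorems \ref{theorem 2}, \ref{theorem 3} and \ref{theorem 5}.

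Finally I would apply the substitution $y_i'=s_iy_i$, as in the proofs of Theorems \ref{theorem 2} and \ref{theorem 5}: its Jacobian has absolute value $1$, it conjugates $A_n^{-1}$ by $\mathrm{diag}(s_0,\dots,s_{n-1})$ — producing $A_n(s_0\cdots s_{n-1})^{-1}$, which has the same determinant — and it turns the region $\{s_iy_i<s_id_i\}$ into $\{y_i'<s_id_i\}$. The integral then equals $N_n(s_0d_0,\dots,s_{n-1}d_{n-1};A_n(s_0\cdots s_{n-1}))$, and together with the prefactor $x^{\alpha}e^{\mu(T_{n-1}-t)}$ this is the asserted formula, closing the induction.
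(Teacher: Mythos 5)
Your proposal follows essentially the same route as the paper's own proof: induction on $n$ with Theorems \ref{theorem 2} and \ref{theorem 5} as base cases, the Green's function representation of Proposition 1 of \cite{OK} to express the $n$-th order price as an integral against the $(n-1)$-th order price at $T_0$, substitution of the induction hypothesis, and evaluation of the resulting Gaussian integral. In fact your write-up is more complete than the paper's, which stops at ``substitute this equality into the above integral representation and calculate the integral''; your identification of the tridiagonal precision-matrix identity (via the Markov property of the normalised Brownian displacements) as the crux is exactly the detail the paper leaves implicit.
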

\begin{proof}
The cases of $n=1$ and $n=2$ were proved by Theorem \ref{theorem 2} and Theorem \ref{theorem 5}. In the case of $n>2$ we will give a sketch of the proof by induction.

We assume that Theorem \ref{theorem 6} holds in the case of $n-1$. From the definition 4.2, $(M^\alpha)^{s_0,s_1\cdots s_{n-1}}_{\xi _0,\xi _1 \cdots \xi _{n-1}}(x,t;T_0,T_1,\cdots ,T_{n-1})$ satisfies \eqref{eq-40} and
$$V(x,T_0)=(M^\alpha)^{s_1\cdots s_{n-1}}_{\xi _1 \cdots \xi _{n-1}}(x,T_0;T_1,\cdots ,T_{n-1})\cdot 1(s_0x>s_0\xi_0)$$
Here $(M^\alpha)^{s_1\cdots s_{n-1}}_{\xi _1 \cdots \xi _{n-1}}(x,T_0T_1,\cdots ,T_{n-1})$ is the price of the (n-1)-th power binary option.
Therefore by preposition 1 of \cite{OK}
\begin{multline}
(M^\alpha)^{s_0s_1\cdots s_{n-1}}_{\xi _0\xi _1 \cdots \xi _{n-1}}(x,t;T_1,\cdots ,T_{n-1})=\nonumber
\\
=e^{-r(T_0-t)}\int_{-\infty}^{+\infty}\frac{1(s_0z>s_0\xi _0)}{\sigma \sqrt{2\pi (T_0-t)}}\frac{1}{z}e^{-\frac{\left(\ln\frac{x}{z}+\left(r-q-\frac{\sigma ^2}{2}\right)(T_i-t)\right)^2}{2\sigma ^2(T_0-t)}}(M^\alpha)^{s_1\cdots s_{n-1}}_{\xi _1\cdots \xi _{n-1}}(z,T_0)dz
\end{multline}
By induction-assumption, the result of Theorem \ref{theorem 6} holds for $(M^\alpha)^{s_1\cdots s_{n-1}}_{\xi _1\cdots \xi _{n-1}}(z,T_0)$. Thus we have 
$$(M^\alpha)^{s_1\cdots s_{n-1}}_{\xi _1\cdots \xi _{n-1}}(z,T_0)=x^\alpha e^{\mu (T_{n-1}-T_0)}N_n(s_0d_0,\cdots,s_{n-1}d_{n-1};A_n(s_0\cdots s_{n-1}))$$
Substitute this equality into the above integral representation and calculate the integral, then we have the result of Theorem \ref{theorem 6} for the case $n>2$.(Proof End)

\end{proof}

\end{document}